\newtheorem{theorem}{Theorem}
\newtheorem{lemma}[theorem]{Lemma}
\newtheorem{claim}[theorem]{Claim}
\newtheorem{prop}[theorem]{Proposition}
\newtheorem{conjecture}{Conjecture}
\newenvironment{claimproof}[1]{%
  \begin{proof}[Proof of Claim.]#1%
}{%
  \end{proof}%
}
\tikzset{>={Latex[length=3pt,width=2.6pt]}}
\newcommand{\R}{\mathbb{R}}
\newcommand{\eps}{\varepsilon}
\newcommand{\floor}[1]{\left\lfloor#1\right\rfloor}
\newcommand{\ceil}[1]{\left\lceil#1\right\rceil}
\DeclareMathOperator{\OPT}{OPT}
\lstdefinelanguage{Julia}%
  {morekeywords={abstract,break,case,catch,const,continue,do,else,elseif,end,
      export,false,for,function,immutable,import,importall,if,in,
      macro,module,quote,return,switch,true,try,type,typealias,
      using,while},%
   sensitive=true,%
   morecomment=[l]\#,%
   morestring=[b]",%
}[keywords,comments,strings]%
\tiny\color{gray},
\title{Weighted Chairman Assignment and Flow-Time Scheduling}
\author{%
  Siyue Liu\thanks{Tepper School of Business, Carnegie Mellon University.
  Email: \href{mailto:siyueliu@andrew.cmu.edu}{\texttt{siyueliu@andrew.cmu.edu}}.}
  \and
  Victor Reis\thanks{Microsoft Research. Email: \href{mailto:victorol@microsoft.com}{\texttt{victorol@microsoft.com}}.}
}
\date{}
\begin{document}

\maketitle

\begin{abstract}
Given positive integers $m, n$, a fractional assignment $x \in [0,1]^{m \times n}$ and weights $d \in \R^n_{>0}$, we show that there exists an assignment $y \in \{0,1\}^{m \times n}$ so that for every $i\in[m]$ and $t\in [n]$, \[ \Big|\sum_{j \in [t]} d_j (x_{ij} - y_{ij}) \Big| < \max_{j \in [n]} d_j. \]
This generalizes a result of Tijdeman (1973) on the unweighted version, known as the chairman assignment problem. This also confirms a special case of the single-source unsplittable flow conjecture with arc-wise lower and upper bounds due to Morell and Skutella (IPCO 2020). As an application, we consider a scheduling problem where jobs have release times and machines have closing times, and a job can only be scheduled on a machine if it is released before the machine closes. We give a $3$-approximation algorithm for maximum flow-time minimization.

\end{abstract}

\section{Introduction}

Let $m,n$ be positive integers. We say that a matrix $x \in [0,1]^{m \times n}$ is a \emph{fractional assignment} if $\sum_{i \in [m]} x_{ij} = 1$ for every column $j \in [n]$, and that $y \in \{0,1\}^{m \times n}$ is an \emph{(integral) assignment} if $\sum_{i \in [m]} y_{ij} = 1$ for every $j \in [n]$. Consider the following problem introduced by Niederreiter in 1972 \cite{Niederreiter1972UniformlyDistributed, Niederreiter1972DistributionFiniteSets}: find the minimum value of $\Delta(m)$ such that for every positive integer $n$ and fractional assignment $x \in [0,1]^{m \times n}$, there is an assignment $y \in \{0,1\}^{m \times n}$ so that for every row $i\in [m]$ and column $t \in [n]$, \[ \Big|\sum_{j \in [t]} (x_{ij} - y_{ij}) \Big| \le \Delta(m).\] 
This is known as the \emph{chairman assignment problem} due to Tijdeman \cite{Tijdeman1980Chairman}: suppose $m$ states form a union and each state $i$ receives a value $x_{ij}$ from being a member at year $j$ so that $\sum_{i\in [m]} x_{ij}=1$. Every year a union chairman has to be selected in such a way that at any year $t$ the accumulated number of chairmen $\sum_{j\in [t]} y_{ij}$ from each state $i$ is proportional to its accumulated value $\sum_{j\in [t]} x_{ij}$. Niederreiter showed a bound of $\Delta(m) \le m-1$~\cite{Niederreiter1972UniformlyDistributed} which was subsequently improved by Meijer and Niederreiter to $\Delta(m) \le O(\log m)$~\cite{MeijerNiederreiter1972FiniteSets} and by Tijdeman to $\Delta(m) \le 1$~\cite{Tijdeman1973Distribution}, who also showed that, for $m > 1$ and any $\delta > 0$, $\Delta(m) \ge 1 - \frac{1}{2m-2} - \delta$ via a family of examples with $n = \Omega(1/\delta)$. The approaches of \cite{MeijerNiederreiter1972FiniteSets} and \cite{Tijdeman1973Distribution} are both based on an application of Hall's theorem to a carefully constructed bipartite graph. Finally, Meijer matched this lower bound by showing $\Delta(m) \le 1-\frac{1}{2m-2}$~\cite{Meijer1973FiniteSets}. Since then, there have been other proofs of this result through different approaches~\cite{Tijdeman1980Chairman,AngelHolroydMartinPropp2009DiscreteLDS,HolroydPropp2010RotorWalks, BertheCartonChevallierSteinerYassawi2024}; see also the survey of Tijdeman~\cite{Tijdeman1982ProgressDiscrepancy}.


We study the \emph{weighted chairman assignment problem} where each $j \in [n]$ has an associated weight $d_j > 0$ and we want to bound $|\sum_{j \in [t]} d_j (x_{ij} - y_{ij}) |$ for every $i\in[m]$ and $t\in[n]$. All of the previous approaches in the unweighted setting rely on the fact that the increments to the integral assignment are integers and do not directly generalize to the weighted setting. Our main theorem generalizes an algorithm of Tijdeman \cite{Tijdeman1980Chairman} (see also Angel, Holroyd, Martin and Propp~\cite{AngelHolroydMartinPropp2009DiscreteLDS}) with a new analysis to show the following:

\begin{theorem}\label{thm:WCAP}
Given positive integers $m, n$ with $m > 1$, a fractional assignment $x \in [0,1]^{m \times n}$ and weights $d \in \R^n_{>0}$, there is a linear-time algorithm that computes an assignment $y \in \{0,1\}^{m \times n}$ so that for every $i\in[m]$ and $t\in [n]$,
\[ \Big|\sum_{j \in [t]} d_j (x_{ij} - y_{ij}) \Big| \le \Big(1-\frac{1}{2m-2}\Big) \cdot \max_{j \in [n]} d_j. \]
\end{theorem}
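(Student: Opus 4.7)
The plan is to construct $y$ online via the greedy chairman rule of Tijdeman and prove by induction on $t$ that the running discrepancies $s_i^{(t)} := \sum_{j \in [t]} d_j (x_{ij} - y_{ij})$ satisfy $|s_i^{(t)}| \le c D$, where $D := \max_j d_j$ and $c := 1 - \frac{1}{2m-2}$. At each step $t$ I form $u_i := s_i^{(t-1)} + d_t x_{it}$, pick the chairman $k(t) \in \arg\max_i u_i$, and set $y_{i,t} = \mathbf{1}[i=k(t)]$, so that $s_k^{(t)} = u_k - d_t$ and $s_i^{(t)} = u_i$ for $i \neq k$. Computing $k(t)$ takes $O(m)$ time per step, which gives the linear-time claim.

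Three of the four bounds needed for the inductive step are immediate. First, $s_k^{(t)} \le cD$ since $u_k \le s_k^{(t-1)} + d_t \le cD + d_t$. Second, $s_k^{(t)} \ge -cD$ since $u_k \ge \tfrac{1}{m}\sum_i u_i = \tfrac{d_t}{m}$, so $u_k - d_t \ge -d_t\tfrac{m-1}{m} \ge -cD$ using $d_t \le D$ and $c \ge \tfrac{m-1}{m}$ for $m \ge 2$. Third, $s_i^{(t)} = u_i \ge s_i^{(t-1)} \ge -cD$ for $i \neq k$. The only difficult bound is $u_i \le cD$ for every $i \neq k$, equivalently, that \emph{at most one} index satisfies $u_i > cD$. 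If two such indices $i, j$ existed, summing would yield
\[ s_i^{(t-1)} + s_j^{(t-1)} > 2cD - d_t(x_{it} + x_{jt}) \ge 2cD - D = \tfrac{m-2}{m-1} D = (2c-1) D, \]
which the basic invariant alone does not prevent for $m \ge 3$.

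My plan to close this gap, and the main obstacle of the proof, is to augment the induction with the pair invariant $|s_i^{(t)} + s_j^{(t)}| \le (2c-1) D$ for all distinct $i, j$. Under this stronger hypothesis the display above is impossible, so the greedy choice picks the unique ``dangerous'' index if one exists. Maintaining the pair invariant is easy when $k \in \{i, j\}$, since the pair sum then changes by $-d_t(1 - x_{it} - x_{jt}) \le 0$. The hard case is $k \notin \{i, j\}$: the pair sum may grow by up to $d_t(x_{it} + x_{jt})$, and I must rule this out by invoking the argmax property $u_k \ge u_i, u_j$ together with $\sum_\ell u_\ell = d_t$. I expect the argument to proceed by a case split on whether $s_i^{(t-1)} + s_j^{(t-1)}$ is already near the pair bound: if so, the argmax property forces $u_k$ to be at least as large as $\max(u_i, u_j)$, which combined with the pair invariant applied to $\{i,k\}$ or $\{j,k\}$ yields a contradiction with the previous-step discrepancies. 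If this two-element invariant proves too weak to propagate, the natural refinement is a family of subset inequalities $\sum_{i \in S} s_i^{(t)} \le \alpha_{|S|}\, D$ with a carefully tuned sequence $\alpha_1, \alpha_2, \dots$, designed so that the greedy update preserves them all in lockstep.
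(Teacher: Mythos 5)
Your algorithm is not the paper's algorithm, and the step you single out as ``the main obstacle'' is a fatal gap, not a technical one. The argmax rule you propose (which, incidentally, is not Tijdeman's rule --- his is deadline-based) is an \emph{online} rule: the choice at time $t$ depends only on columns $1,\dots,t$. As noted in the paper's related-work discussion, the best achievable guarantee for deterministic online rules is $\sum_{j=2}^m \frac{1}{j}=\Theta(\log m)$, and since your rule is deterministic, the adaptive-adversary lower bound translates into fixed instances on which it incurs discrepancy $\Omega(\log m)$; hence no family of invariants can push its analysis down to $\bigl(1-\frac{1}{2m-2}\bigr)D$. Concretely, already for the unweighted case $m=4$, $D=1$: take column $1=(\tfrac14,\tfrac14,\tfrac14,\tfrac14)$, column $2=(0,\tfrac13,\tfrac13,\tfrac13)$, column $3=(0,0,\tfrac12,\tfrac12)$, perturbed infinitesimally so that greedy serves row $1$ then row $2$. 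After step $2$, rows $3$ and $4$ each carry $s^{(2)}=\tfrac14+\tfrac13=\tfrac{7}{12}$, so $s_3^{(2)}+s_4^{(2)}=\tfrac76>(2c-1)D=\tfrac23$ --- your pair invariant is already violated, confirming that the hard case $k\notin\{i,j\}$ genuinely occurs --- and after column $3$ whichever of rows $3,4$ is not served sits at $\tfrac14+\tfrac13+\tfrac12=\tfrac{13}{12}>1>cD$, violating the theorem's bound itself. The same mechanism defeats any graded family $\alpha_{|S|}$ with $\alpha_1\le c$.

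The missing idea is look-ahead. The paper restricts the choice at time $t$ to \emph{candidates}, rows with $P_t(i)\ge N_{t-1}(i)+\min\{d_t/m,\eps\}$ where $P,N$ are the weighted fractional and integral prefix loads and $\eps=\frac{1}{2m-2}$, and among candidates serves the one with the earliest \emph{deadline} $D_t(i)=\min\{T\ge t: P_T(i)\ge N_{t-1}(i)+1-\eps\}$, which depends on future columns. The lower bound $\Delta_t(i)\ge -1+\eps$ then follows from the candidate condition much as in your easy bullets, but the upper bound is not proved by any per-step invariant: assuming some $\Delta_t(i)>1-\eps$, one takes the last time $t'\le t$ at which a row with deadline beyond $t$ was served, considers the set $I$ of rows whose deadline at time $t'$ is at most $t$, and reaches a contradiction by comparing $\sum_{j\in(t',t]}d_j$ with the growth of $P$ versus $N$ on $I$ over the window $(t',t]$. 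To salvage your write-up you would need to switch to an algorithm of this earliest-deadline type and a global window argument; the online greedy cannot be repaired.
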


The weighted chairman assignment problem gains renewed interest due to a conjecture by Morell and Skutella \cite{MorellSkutella2020IPCO}: given a flow $x$ from a single source to multiple sinks with varying demands, there is an \emph{unsplittable flow} $y$ where each sink should be served by a single path, such that the discrepancy between the two flows on each arc is at most the maximum demand. The weighted chairman assignment problem with weights $d_1,\ldots, d_n$ can be modeled as a single-source unsplittable flow instance with demands $d_1,\ldots, d_n$ (see  Section \ref{sec:related_work}), and therefore Theorem \ref{thm:WCAP} with discrepancy bound $\max_{j\in [n]} d_j$ would follow if the conjecture were true. When the demands are uniform, i.e., $d_1=\ldots=d_n=1$, the conjecture is true because by adding capacity constraints $\floor{x}\leq y\leq \ceil{x}$ to the flow polytope we can find an \emph{integral} flow $y$ which is automatically unsplittable and satisfies $|x-y|_1\leq 1$. Therefore, the discrepancy bound $1$ for the (unweighted) chairman assignment follows as a consequence. The weighted setting corresponds to nonuniform demands, where so far all linear programming techniques fall apart.

As an application, we consider the following scheduling problem. Suppose there are $m$ machines $M$ and $n$ jobs $J$. Each machine $i \in M$ has a closing time $b_i \ge 0$. Each job $j \in J$ has a release time $r_j \ge 0$ and a processing time $d_j \ge 0$. Job $j$ can be scheduled on machine $i$ if and only if $r_j \le b_i$. The \emph{flow-time} of a job $j$ is defined as the time elapsed from its release to completion. The goal is to schedule jobs to machines to minimize the maximum flow-time of any job. This is a special case of the \emph{restricted assignment} variant of maximum flow-time minimization, where each job $j$ has a subset of machines $M_j\subseteq M$ that it can be scheduled on. In the most general form of the problem, each job $j$ has a potentially different processing time $d_{ij}$ on machines $i\in [m]$. 
Bansal, Rohwedder and Svensson~\cite{BansalRohwedderSvensson2022STOC} proved that a natural linear programming relaxation has integrality gap $O(\sqrt{\log n})$, and conjectured that an $O(1)$-approximation should be possible. We confirm this conjecture for the setting of machine closing times:

\begin{theorem}\label{thm:closing_time}
There is a $(3-\frac{1}{m-1})$-approximation algorithm for the maximum flow-time minimization problem in the setting of machine closing times.
\end{theorem}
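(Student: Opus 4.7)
The plan is to apply Theorem~\ref{thm:WCAP} to round a natural LP relaxation of the scheduling problem and then run FIFO on each machine. For the LP relaxation, introduce variables $x_{ij}\in[0,1]$ (one for each job $j$ and each \emph{eligible} machine $i$, i.e., with $b_i\ge r_j$) and a scalar $T\ge 0$, and minimize $T$ subject to $\sum_i x_{ij}=1$ for every job $j$ and the load constraints
\[
\sum_{j:\, s<r_j\le t} x_{ij}\, d_j \;\le\; (t-s)+T \qquad \text{for every machine } i \text{ and every pair of release times } s<t.
\]
The LP optimum $T_{\mathrm{LP}}$ is at most $T^*:=\OPT$: in the optimal schedule, every job with $r_j\in(s,t]$ must complete by $r_j+T^*\le t+T^*$ and can only start processing after its release time $r_j>s$, so such jobs on any machine fit in a window of length $(t-s)+T^*$.

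For the rounding, order the jobs $1,\ldots,n$ by release time and apply Theorem~\ref{thm:WCAP} with weights $d_j$ to turn $x$ into an integral assignment $y$; then schedule the jobs FIFO on each machine. A standard queueing argument bounds the flow-time of job $j$ on its assigned machine $i$ by
\[
\max_{k\in[j]} \Bigl(\sum_{\ell=k}^{j} y_{i\ell}\, d_\ell - (r_j-r_k)\Bigr).
\]
Writing $\sum_{\ell=k}^{j} y_{i\ell} d_\ell$ as a difference of two prefix sums and applying the discrepancy bound from Theorem~\ref{thm:WCAP} at prefixes $j$ and $k-1$ introduces an additive error of at most $2\bigl(1-\tfrac{1}{2m-2}\bigr)\,d_{\max}$; the remaining fractional quantity $\sum_{\ell=k}^{j} x_{i\ell} d_\ell$ is at most $(r_j-r_k)+T_{\mathrm{LP}}$ by the LP load constraint (with $s\to r_k^-$ and $t=r_j$). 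Using $d_{\max}\le T^*$ (since any job's flow-time is at least its processing time), the flow-time is at most
\[
T_{\mathrm{LP}} + 2\Bigl(1-\tfrac{1}{2m-2}\Bigr) d_{\max} \;\le\; \Bigl(3-\tfrac{1}{m-1}\Bigr) T^*.
\]

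The main obstacle is that Theorem~\ref{thm:WCAP} does not \emph{a priori} preserve the eligibility pattern $x_{ij}=0\Rightarrow y_{ij}=0$, so the rounded $y$ could assign a job to a machine that has already closed. The key technical step is to adapt the Tijdeman-style algorithm underlying Theorem~\ref{thm:WCAP}---for instance, by sorting machines in order of closing time and deactivating each one once the current job's release time exceeds its closing time---so that closed machines never receive later jobs, while still preserving the chairman discrepancy bound. Once this structural property is in place, the analysis above delivers the claimed approximation ratio.
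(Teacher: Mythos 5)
Your LP, the FIFO/queueing bound, and the trick of writing an interval sum as a difference of two prefix sums (paying $2(1-\tfrac{1}{2m-2})d_{\max}$ and using $d_{\max}\le \OPT$, $T_{\mathrm{LP}}\le\OPT$) all match the paper's route through Lemma~\ref{lemma:flow-time}. But the one step you defer---making the rounding respect eligibility, i.e.\ $y_{ij}=0$ whenever $r_j>b_i$, \emph{while preserving the discrepancy guarantee}---is exactly the nontrivial content of the theorem, and your sketched fix does not obviously work. If you run the algorithm forward on jobs sorted by release time and simply ``deactivate'' a machine once it closes, every lemma behind Theorem~\ref{thm:WCAP} has to be re-proved: a closed machine can retain a fractional surplus $P_t(i)-N_{t-1}(i)$ close to $1$, so the averaging argument showing $C(t)\neq\emptyset$ no longer guarantees an \emph{open} candidate, and the upper-bound argument (Lemma~\ref{lem:upperbound}) counts over all rows, not just open ones. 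Without resolving this you have not shown the assignment is feasible and within the claimed discrepancy, so the approximation ratio does not follow.

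The paper's resolution is different and requires no modification of the algorithm: process the jobs in \emph{reverse} order of release time. Under this reversal a closing time becomes an ``opening time,'' i.e.\ the zero pattern $x_{ij}=0$ for $j>a_i$ becomes a prefix-zero pattern, and Algorithm~\ref{alg:deadline} automatically never assigns to a row whose fractional prefix mass is still zero, since such a row cannot satisfy the candidate condition $P_t(i)\ge N_{t-1}(i)+\min\{d_t/m,\eps\}$ (Proposition~\ref{prop:open_time}). Reversing back yields $y_{ij}=0$ whenever $r_j>b_i$, and the interval bound $(2-\tfrac{1}{m-1})d_{\max}$ follows by differencing two reversed prefixes exactly as in your computation (Lemma~\ref{lemma:closing_time}); combined with Lemma~\ref{lemma:flow-time} this gives the $(3-\tfrac{1}{m-1})$-approximation. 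So your outer analysis is fine, but the missing idea is this time-reversal argument (or some genuine substitute for it), not a routine adaptation of the algorithm.
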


Previously, a $(3-\frac{2}{m})$-approximation was known via a greedy algorithm (first-in-first-out, or FIFO) for the setting of identical machines where $b_i = \infty$ for all $i \in M$~\cite{BenderChakrabartiMuthukrishnan1998SODA, Mastrolilli2004IJFCS}, and this bound is known to be tight for FIFO~\cite{Mastrolilli2003Computing}. We also show that FIFO has approximation ratio at least $\Omega(\log m)$ for the setting of closing times.

Furthermore, we give several lower bound constructions. We provide a construction simpler than Tijdeman's~\cite{Tijdeman1973Distribution} to establish a matching lower bound for the chairman assignment problem:

\begin{prop} \label{prop:caplb}
For any positive integer $m > 1$ there exists a fractional assignment $x \in [0,1]^{m \times (m-1)}$ so that for any assignment $y \in \{0,1\}^{m \times (m-1)}$, there exists some $i\in [m]$ and $t \in [m-1]$ for which \[ \Big|\sum_{j \in [t]} (x_{ij} - y_{ij}) \Big| \ge 1-\frac{1}{2m-2}. \]
\end{prop}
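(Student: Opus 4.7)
The plan is to exhibit a fractional assignment with a ``staircase'' partial-sum profile and then show by a tight induction that no integral assignment can stay strictly below the target discrepancy bound. Setting $\alpha := 1/(2m-2)$, I construct $x \in [0,1]^{m \times (m-1)}$ so that the row partial sums $P_{i,t} := \sum_{j \leq t} x_{ij}$ take one of three values at every time $t$: row $m$ is pinned at $P_{m,t} = 1/2$; each row $i \in \{3, \ldots, m-1\}$ is ``filled'' at column $m-i+1$, i.e., $P_{i,t}$ equals a common small value $\gamma_t$ before that column and is held at $1-\alpha$ afterwards; and rows $1$ and $2$ are both filled (to $1-\alpha$) at the last column $m-1$. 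Imposing $\sum_i x_{ij} = 1$ for each $j$ determines $\gamma_t = (1/2 + (t-1)\alpha)/(m-t)$, which is positive, increasing in $t$, and satisfies $\gamma_{m-2} = 1/2 - \alpha$, ensuring that all $x_{ij} = P_{i,j} - P_{i,j-1}$ are nonnegative.

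Next, assuming for contradiction that some integral $y$ satisfies $|P_{i,t} - K_{i,t}| < 1-\alpha$ for every $(i,t)$, with $K_{i,t} := \sum_{j \leq t} y_{ij}$, I will show inductively that $K_{\cdot, t} = e_m + e_{m-1} + \cdots + e_{m-t+1}$ for every $t \in \{1, \ldots, m-2\}$. Rows with $P = 1-\alpha$ (the already-filled rows and the one being filled at column $t$) have $K_{i,t} = 1$ pinned, since $K_{i,t} = 0$ gives discrepancy exactly $1-\alpha$ and $K_{i,t} \geq 2$ gives discrepancy at least $1+\alpha$. Row $m$ has $P = 1/2$, and monotonicity starting from $K_{m,1} = 1$ together with the discrepancy bound forces $K_{m,t} = 1$. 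The remaining ``small'' rows have partial sum $\gamma_t \leq 1/2 - \alpha$, which individually allows $K_{i,t} \in \{0,1\}$; however the column constraint $\sum_i K_{i,t} = t$ is already saturated by the $t$ forced $1$'s, so the small rows' counts must all be $0$.

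Finally at $t = m-1$, $K_{\cdot, m-1}$ must extend $K_{\cdot, m-2} = e_3 + \cdots + e_m$ by exactly one additional $1$; placing it in any of rows $3, \ldots, m$ creates discrepancy at least $1+\alpha$ (since rows $3, \ldots, m-1$ have $P = 1-\alpha$ and row $m$ has $P = 1/2$), so it must be placed in row $1$ or row $2$. Either choice leaves the other of rows $1, 2$ with $K_{i, m-1} = 0$ while $P_{i, m-1} = 1-\alpha$, yielding discrepancy exactly $1-\alpha$ and contradicting the strict assumption. The main point requiring care is tuning the profile $\gamma_t$ so that both the column-sum identities and the inductive pinning go through simultaneously; once the right profile is fixed, everything else reduces to a short direct check.
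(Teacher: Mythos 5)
Your construction and argument check out (writing $\alpha=\tfrac{1}{2m-2}$ as you do): the profile $\gamma_t=\frac{1/2+(t-1)\alpha}{m-t}$ is indeed positive and increasing with $\gamma_{m-2}=\tfrac12-\alpha$, so all entries of $x$ are nonnegative and every column sums to $1$; the pinning of filled rows to $K_{i,t}=1$, the bound $K_{m,t}=1$, and the column-sum identity do force the assignment uniquely for $t\le m-2$, and the last column cannot serve both rows $1$ and $2$. The only thing to add is a word on small $m$: for $m=3$ the middle category $\{3,\dots,m-1\}$ is empty (harmless), and for $m=2$ your three row categories collapse onto one another, though the matrix then degenerates to $x_{1,1}=x_{2,1}=\tfrac12$ and the final-column step still gives the contradiction. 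Your route is, however, genuinely different from the paper's. The paper also puts mass $\tfrac12$ on row $m$ in column $1$ and gives every other row first entry $\tfrac{1}{2m-2}$, but then lets all rows $i<m$ grow uniformly by $\tfrac{1}{m-1}$ per column; the contradiction is then a one-shot pigeonhole: column $1$ is forced to row $m$, each of the $m-1$ remaining rows has total mass $1-\tfrac{1}{2m-2}$ and so must be assigned at least once, yet only $m-2$ columns remain. Your staircase makes that counting local and constructive: you exhibit that the instance admits essentially a unique schedule respecting the strict bound until the final column, which is a stronger structural statement, at the price of an induction and a more delicate choice of profile. Both proofs rest on the same two load-bearing ideas --- a half-weight row that must absorb column $1$, and $m-1$ rows of total mass exactly $1-\tfrac{1}{2m-2}$ competing for $m-2$ columns --- so the paper's uniform profile buys a two-line pigeonhole, while your version buys an explicit forcing argument.
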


The assignment given in~\cite{Tijdeman1973Distribution} that achieves $\Delta(m)\le 1$ yields a strict inequality and has the property that $y_{ij} = 0$ whenever $x_{ij} = 0$. We give a construction showing that Tijdeman's bound is tight under this additional constraint:

\begin{prop} \label{prop:carlb}
For any $\delta > 0$ there exists a positive integer $n$ and a fractional assignment $x \in [0,1]^{3 \times n}$ so that for any assignment $y \in \{0,1\}^{3 \times n}$ satisfying $y_{ij} = 0$ whenever $x_{ij} = 0$, there exists some $i\in [3]$ and $t \in [n]$ for which \[ \Big|\sum_{j \in [t]} (x_{ij} - y_{ij}) \Big| \ge 1-\delta. \]
\end{prop}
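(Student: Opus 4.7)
The plan is to construct $x \in [0,1]^{3 \times 2N}$ for an integer $N \geq 1/\delta$, setting $\eps := 1/N$ so that $\eps \le \delta$ and $N\eps = 1$. The columns alternate between two types: odd-indexed ``type A'' columns have $x_{\cdot,2k-1} = (1-\eps, \eps, 0)$ with support $\{1,2\}$, and even-indexed ``type B'' columns have $x_{\cdot,2k} = (\eps, 0, 1-\eps)$ with support $\{1,3\}$. A direct calculation gives row sums $(N, 1, N-1)$. Any support-respecting $y$ picks row $1$ or $2$ on type A columns and row $1$ or $3$ on type B columns, so I would parameterize $y$ by the pair $(a,b)$, where $a$ (resp.\ $b$) counts type A (resp.\ type B) columns in which $y$ picks row $1$, giving $y$-row sums $(a+b,\, N-a,\, N-b)$.

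First I would observe that the final row discrepancies are the nonnegative integers $|N-a-b|$, $|a - (N-1)|$, and $|b-1|$. The only $(a,b)$ that makes all three vanish is $(a,b) = (N-1, 1)$; for every other choice at least one of them is $\ge 1 > 1-\delta$, which yields a witness at $t = n$ and finishes that case. For the remaining case $(a,b) = (N-1,1)$, let $k_1^* \in [N]$ index the single type A column where $y$ picks row $2$, and $k_2^* \in [N]$ the single type B column where $y$ picks row $1$. I would split into two subcases by the ordering of $k_1^*$ and $k_2^*$: if $k_2^* \ge k_1^*$, a short cumulative count at $t = 2k_1^* - 1$ gives $x$-row-$1$ sum $k_1^* - \eps$ and $y$-row-$1$ count $k_1^* - 1$, hence row-$1$ discrepancy $1-\eps \ge 1-\delta$; if $k_2^* < k_1^*$, then at $t = 2k_2^*$ the $x$-row-$1$ sum is $k_2^*$ while the $y$-row-$1$ count is $k_2^* + 1$, giving row-$1$ discrepancy exactly $1$.

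I expect the main obstacle to be correctly pinpointing these two witness columns and verifying the case-dependent cumulative counts. The structural intuition is that each type A column contributes nearly $1$ to the $x$-row-$1$ sum, so a single row-$2$ pick on a type A column instantaneously opens a row-$1$ gap of about $1$. The compensating type B row-$1$ pick either precedes this gap (leaving an overshoot visible at $t = 2k_2^*$) or does not (leaving the original gap visible at $t = 2k_1^* - 1$); there is no arrangement of the two minority picks that avoids both witnesses, which is exactly why the support constraint is the binding obstruction.
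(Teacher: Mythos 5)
Your proof is correct. The construction is of the same flavor as the paper's---both alternate columns supported on $\{1,2\}$ (row $1$ heavy, row $2$ carrying mass $\approx\delta$) with columns supported on $\{1,3\}$ (row $3$ heavy), so that row $2$'s only chance to be assigned is through its tiny entries---but the analysis takes a different route. The paper prepends a column $(\delta,0,1-\delta)$ and runs a forward forcing argument under the assumption that all prefix discrepancies are $<1-\delta$: the first column forces $y_{3,1}=1$, the row-$1$ prefix sums $s-\delta$ then force the row-$1$ picks on every even column, the support constraint forces row $3$ on every odd column, and row $2$ starves while accumulating total mass $2\delta p\ge 1-\delta$, giving the witness in row $2$ at $t=n$. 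You instead normalize so the total row sums $(N,1,N-1)$ are integers, use integrality of the discrepancies at $t=n$ to pin down that $y$ must make exactly one row-$2$ pick and exactly one row-$1$ pick on a type B column, and then locate a row-$1$ witness at an intermediate prefix depending on the relative order of those two exceptional columns. Your route avoids any step-by-step forcing (the only global constraint used is at $t=n$), at the cost of a short case analysis on $(a,b)$ and on $k_1^*$ versus $k_2^*$; the paper's route is a single chain of forced moves ending with a starving row. One cosmetic point: your construction needs $\eps<1$ (equivalently $\delta<1$) for the stated supports, but the statement is vacuous for $\delta\ge 1$, so this is not a gap.
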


Finally, we disprove a conjecture in~\cite{AngelHolroydMartinPropp2009DiscreteLDS} that one may obtain a bound of 1 for all intervals, not just prefixes:

\begin{prop} \label{prop:intlb}
There exist positive integers $m,n$ and a fractional assignment $x \in [0,1]^{m \times n}$, so that for any assignment $y \in \{0,1\}^{m \times n}$, there exist some $i\in [m]$ and $1 \le s \le t \le n$ for which \[ \Big|\sum_{j \in [s,t]} (x_{ij} - y_{ij}) \Big| > 1. \]
\end{prop}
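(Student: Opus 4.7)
My plan is to exhibit an explicit fractional assignment $x$ of small dimensions for which no integer assignment $y\in\{0,1\}^{m\times n}$ can keep every interval discrepancy at most $1$. First, $m=2$ cannot work: Theorem~\ref{thm:WCAP} applied with $d\equiv 1$ gives an assignment $y$ with $\bigl|\sum_{j\le t}(x_{1j}-y_{1j})\bigr|\le 1/2$ at every prefix $t$, and because the two rows of $x-y$ sum to zero at every prefix, the range of the cumulative error $E_i(t):=\sum_{j\le t}(x_{ij}-y_{ij})$ in each row is at most $1$; hence every interval discrepancy is at most $1$. So any counterexample has $m\ge 3$.

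For $m\ge 3$, Theorem~\ref{thm:WCAP}'s prefix bound is at least $3/4$, so $E_i$ can a priori span an interval of length up to $2\cdot 3/4 = 3/2 > 1$, leaving room for an interval counterexample. I would then search for examples with $m=3$ and modest $n$: for each candidate $x$, enumerate the $3^n$ integer assignments and check whether the maximum interval discrepancy $\max_{i,\,1\le s\le t\le n}\bigl|\sum_{j=s}^{t}(x_{ij}-y_{ij})\bigr|$ exceeds $1$ for every $y$. Promising candidate families include uniform columns $x_{ij}=\alpha_i$ with carefully chosen rates, rotating two-of-three columns such as $(1/2,1/2,0)$, $(1/2,0,1/2)$, $(0,1/2,1/2)$ (possibly with small perturbations to break tie-breaking freedom), and concatenations of the tight prefix construction from Proposition~\ref{prop:caplb} with a row-permuted ``mirror'' segment designed to force some row to attain both a high positive and a high negative prefix discrepancy, so that the range of $E_i$ in that row exceeds $1$.

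The main obstacle is identifying the right $x$. Many natural candidates admit cyclic $y$ patterns that rule them out: for instance, repeating $(1,2,3)$ at $\alpha=(1/3,1/3,1/3)$ yields interval discrepancy at most $2/3$, and simple symmetric rotations admit balanced roundings achieving interval discrepancy exactly $1$. The counterexample must therefore simultaneously block all such ``good'' integer roundings. Once a valid $x$ is in hand the verification reduces to a finite case check, but the search itself likely requires either computer assistance over small $(m,n)$ or a structural observation exploiting, for example, a non-trivial arithmetic relation between the rates that precludes simultaneous $1$-balanced rounding in all rows.
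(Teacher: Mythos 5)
There is a genuine gap: your proposal is a search plan, not a proof. The statement is existential, so the whole content lies in exhibiting a concrete $x$ and verifying that \emph{every} assignment $y$ has some interval discrepancy exceeding $1$; you never produce such an $x$, and you explicitly acknowledge that ``the main obstacle is identifying the right $x$.'' Your preliminary reductions (ruling out $m=2$ via Theorem~\ref{thm:WCAP}, noting that the prefix bound $\tfrac34$ for $m=3$ leaves room for a range larger than $1$) are correct but only tell you where to look; they do not establish the proposition. The candidate family you list first --- constant columns $x_{ij}=\alpha_i$ --- is in fact the right one: the paper's proof takes $m=3$, $n=100$, $x_{ij}=v_i$ with $v=(0.01,0.48,0.51)$, and verifies computationally that the minimum over all assignments $y$ of the maximum interval discrepancy equals $1.32>1$.

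A second, more practical issue is that your proposed verification method would likely not get you there. You suggest enumerating all $3^n$ assignments for ``modest $n$,'' but the known example has $n=100$, far beyond brute-force enumeration; the paper instead formulates the verification as an integer program (minimize $\Delta$ subject to $\sum_i y_{ij}=1$ and $\bigl|\sum_{j=s}^t (x_{ij}-y_{ij})\bigr|\le\Delta$ for all $i,s\le t$) and certifies optimality with a MIP solver. So even granting computer assistance, you would need either a much smaller counterexample (whose existence you have not shown) or the ILP formulation rather than exhaustive enumeration, plus the specific rates $v$ --- none of which the proposal supplies.
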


In particular, this implies that we cannot hope to directly apply our rounding approach to obtain a 2-approximation for maximum flow-time minimization.

\paragraph{Organization} In Section \ref{sec:related_work}, we discuss related work. In Section \ref{sec:proof}, we prove our main Theorem \ref{thm:WCAP}. In Section \ref{sec:schedule}, we discuss connections to maximum flow-time scheduling and prove Theorem \ref{thm:closing_time}. In Section \ref{sec:lower_bound}, we give lower bound constructions to prove Propositions \ref{prop:caplb}, \ref{prop:carlb} and \ref{prop:intlb}. In Section \ref{sec:open}, we formulate some open questions. 

\section{Related work}\label{sec:related_work}

Theorem~\ref{thm:WCAP} resolves a special case of a conjecture of Morell and Skutella~\cite{MorellSkutella2020IPCO} which can be formally stated as follows.
Let $D=(V,A)$ be a directed acyclic graph with source $s \in V$, sink terminals $t_1, \dots, t_n \in V$ and associated demands $d \in \R^n_{> 0}$. We say a flow $x \in \R^{A}_{\ge 0}$ satisfies the demands if $x(\delta^-(t_j))=d_j$ for every $j\in [n]$, and $x(\delta^-(v))=x(\delta^+(v))$ for every $v\in V\setminus \{s,t_1,...,t_n\}$. We say a flow $y \in \R^{A}_{\ge 0}$ is \emph{unsplittable} if for each $j \in [n]$ there is a single path $P_j$ from $s$ to $t_j$ that carries $d_j$ units of flow, so that $y_a=\sum_{j: a\in P_j} d_j$ for every $a\in A$.
Morell and Skutella conjectured the following:
\begin{conjecture}[\cite{MorellSkutella2020IPCO}]\label{conj:ssuf}
   For every flow $x \in \R^{A}_{\ge 0}$ satisfying the demands, there is an unsplittable flow $y \in \R^{A}_{\ge 0}$ such that $|x_a-y_a|\le \max_{j\in [n]} d_j$ for all $a \in A$.
\end{conjecture}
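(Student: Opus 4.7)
The plan is to bootstrap Theorem~\ref{thm:WCAP} into a proof of Conjecture~\ref{conj:ssuf} by induction on the DAG structure. The weighted chairman assignment is itself a special instance of single-source unsplittable flow (as noted in Section~\ref{sec:related_work}), so Theorem~\ref{thm:WCAP} already handles the ``linear'' base case with the claimed $\max_j d_j$ bound (in fact slightly better); the task is to lift that argument to arbitrary DAGs.

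For a general DAG $D$, after fixing a path decomposition of $x$, I would process internal vertices in topological order. At a vertex $v$ with outgoing arcs $(v,u_1),\ldots,(v,u_{m_v})$, treat the $u_i$ as ``machines,'' the demands currently passing through $v$ (determined by the committed routings at $v$'s ancestors) as ``columns'' with weights $d_j$, and the fraction of each demand's flow sent to each $u_i$ in the original $x$ as the fractional assignment. An application of Theorem~\ref{thm:WCAP} then integrally routes each such demand through a single outgoing arc of $v$ while keeping the discrepancy $|x_{(v,u_i)} - y_{(v,u_i)}|$ below $(1-\tfrac{1}{2m_v-2})\max_j d_j$ for each $i$. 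Iterating this to the sinks produces an unsplittable flow with a local per-vertex discrepancy guarantee on every outgoing arc.

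The hardest part will be choosing a globally consistent column ordering across vertices. Theorem~\ref{thm:WCAP} only controls \emph{prefix} sums with respect to a chosen order; when two demands $j,j'$ both traverse an arc $(v,u_i)$ and then diverge to separate descendants, the orders used in the chairman instances at $v$, at $u_i$, and deeper must come from a single linear order on items for the arc-$(v,u_i)$ bound inherited from $v$ to survive the rounding decisions made downstream. Producing such a global order -- for instance as a DFS traversal of a path-decomposition arborescence that simultaneously serves as the chairman ordering at every internal vertex -- is the crux of the conjecture; existing LP-rounding and iterative augmentation techniques only enforce one-sided arc bounds and break down precisely because the two-sided $\max_j d_j$ constraint requires cancellation between rounding decisions at different vertices that no known combinatorial relaxation naturally delivers. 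A genuinely new ingredient -- perhaps a Hall-type bipartite argument in the spirit of Tijdeman's original proof, lifted to the DAG setting via the path decomposition, or a global potential function that aggregates the per-vertex chairman instances -- appears to be necessary to close the conjecture.
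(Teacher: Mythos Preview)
The statement you are addressing is an open \emph{conjecture}; the paper does not prove it and only establishes the special case corresponding to the spider digraph in Figure~\ref{fig:ssuf} (namely Theorem~\ref{thm:WCAP}). Your proposal is, by its own admission, not a proof either---your final paragraph concedes that ``a genuinely new ingredient \ldots\ appears to be necessary to close the conjecture.'' So there is nothing to compare against a paper proof, because neither exists.

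That said, the local step in your second paragraph already breaks, independently of the ordering issue you flag later. At a vertex $v$, the set $S_v$ of demands that have been integrally committed to $v$ is determined by upstream rounding, while ``the fraction of each demand's flow sent to each $u_i$ in the original $x$'' is only defined for demands that carry positive $x$-flow through $v$. A demand $j\in S_v$ with zero $x$-flow through $v$ has no fractional row to place in the chairman instance at $v$. Even when all such fractions are defined, Theorem~\ref{thm:WCAP} applied at $v$ bounds
\[
\Bigl|\sum_{j\in S_v} d_j\bigl(\pi_j^{(v,u_i)} - y_{ij}\bigr)\Bigr|,
\]
where $\pi_j^{(v,u_i)}$ is the conditional split of $j$ at $v$; but the quantity you must bound is
\[
\bigl|x_{(v,u_i)} - y_{(v,u_i)}\bigr| \;=\; \Bigl|\sum_{j\in[n]} d_j\cdot(\text{$x$-fraction of $j$ on }(v,u_i)) \;-\; \sum_{\substack{j\in S_v\\ \text{routed to }u_i}} d_j\Bigr|.
\]
The outer sum here ranges over \emph{all} demands, including those with $x$-flow through $v$ that were integrally sent elsewhere, and the fractions differ from $\pi_j^{(v,u_i)}$ by the factor ``$x$-fraction of $j$ reaching $v$.'' The mismatch between these two expressions is precisely the accumulated upstream discrepancy on the in-arcs of $v$, over which you have no control. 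So the sentence ``keeping the discrepancy $|x_{(v,u_i)}-y_{(v,u_i)}|$ below $(1-\tfrac{1}{2m_v-2})\max_j d_j$'' is unjustified: the chairman instance at $v$ never sees the original arc values $x_{(v,u_i)}$, only a surrogate conditioned on $S_v$. The global-ordering difficulty you identify is genuine, but the induction does not get that far.
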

They also showed the one-sided bound $x_a - y_a \le \max_{j \in [n]} d_j$ for all $a \in A$ may be satisfied, thereby complementing the previously known one-sided bound $y_a - x_a \le \max_{j \in [n]} d_j$  for all $a \in A$ of Dinitz, Garg and Goemans~\cite{DinitzGargGoemans1999Combinatorica}. So far, Conjecture~\ref{conj:ssuf} has been established when the demands have the property that one divides another, i.e., $d_1\mid d_2\mid \ldots \mid d_n$ \cite{MorellSkutella2020IPCO} and for special digraphs such as acyclic planar digraphs~\cite{TraubVargasKochZenklusen2024SODA}; see also~\cite{AlmoghrabiSkutellaWarode2025IPCO} for stronger guarantees for series-parallel digraphs.  

\begin{figure}[H]
\centering
\begin{tikzpicture}[
  x=0.7cm, y=1.2cm,              
  edge/.style={-Latex, line width=.9pt},
  toT/.style={shorten >=6pt},    
  term/.style={rectangle, minimum size=7pt, inner sep=0pt, draw=black, fill=black},
  src/.style={circle, minimum size=7pt, inner sep=0pt, draw=black, fill=black},
  lab/.style={font=\large}
]

\coordinate (s)  at (0,1);

\coordinate (a1) at (4,2);   
\coordinate (b1) at (4,-1.1);   
\coordinate (d1) at (4,-0.3);
\coordinate (c1) at (4, 0.5);
\coordinate (c2) at (8, 0.5);
\coordinate (c3) at (12, 0.5);

\coordinate (a2) at (8,2);   
\coordinate (b2) at (8,-1.1);
\coordinate (d2) at (8,-0.3);

\coordinate (a3) at (12,2);   
\coordinate (b3) at (12,-1.1);
\coordinate (d3) at (12,-0.3);

\coordinate (t3) at (7,1);   
\coordinate (t2) at (11,1);
\coordinate (t1) at (15,1);

\node[src] (S) at (s) {};
\node[term] (T3) at (t3) {};
\node[term] (T2) at (t2) {};
\node[term] (T1) at (t1) {};

\node[lab,left=5pt of S] {$s$};
\node[lab,font=\itshape,above right=2pt and 0pt of S] {$\displaystyle\sum_{j=1}^{3} x_{ij} d_j$};

\node[lab,right=4pt of T3] {$t_3$};
\node[lab,right=4pt of T2] {$t_2$};
\node[lab,right=4pt of T1] {$t_1$};
\node[lab,font=\itshape, above right=-1pt of a3] {$i_1$};

\node[lab,font=\itshape, above right=-1pt of a2] {$i_2$};

\node[lab,font=\itshape, above left=-1pt of a1] {$i_3$};

\draw[edge] (S) -- (a1);
\draw[edge] (S) -- (b1);
\draw[edge] (S) -- (c1);
\draw[edge] (S) -- (d1);

\draw[edge] (a1) -- (a2)
  node[midway, above] {$x_{i1} d_1 + x_{i2} d_2$};
\draw[edge] (a2) -- (a3)
  node[midway, above] {$x_{i1} d_1$};

\draw[edge] (b1) -- (b2);
\draw[edge] (b2) -- (b3);
\draw[edge] (c1) -- (c2);
\draw[edge] (c2) -- (c3);
\draw[edge] (d2) -- (d3);

\draw[edge] (d1) -- (d2);

\draw[edge,toT] (a1) -- (t3) node[pos=.52, above right=-1pt] {$x_{i3} d_3$};
\draw[edge,toT] (b1) -- (7.15,1);
\draw[edge,toT] (c1) -- (7,1.03);
\draw[edge,toT] (d1) -- (t3);

\draw[edge,toT] (a2) -- (t2) node[pos=.54, above right=-1pt] {$x_{i2} d_2$};
\draw[edge,toT] (b2) -- (11.15,1);
\draw[edge,toT] (c2) -- (11,1.03);
\draw[edge,toT] (d2) -- (t2);

\draw[edge,toT] (a3) -- (t1) node[pos=.54, above right=-1pt] {$x_{i1} d_1$};
\draw[edge,toT] (b3) -- (15.15,1);
\draw[edge,toT] (c3) -- (15,1.03);
\draw[edge,toT] (d3) -- (t1);

\end{tikzpicture}
\caption{Reduction for $m = 4$, $n = 3$. We create $n$ copies $i_1, \dots, i_n$ of each $i \in [m]$ to capture all prefixes. Terminal $t_j$ has demand $\sum_{i\in[m]} x_{ij}d_j=d_j$, and is connected to $i_j$ for each $i\in [m]$. The flow value on arc $(i_{t+1},i_t)$ is $\sum_{j\in[t]} x_{ij}d_j$.}
\label{fig:ssuf}
\end{figure}
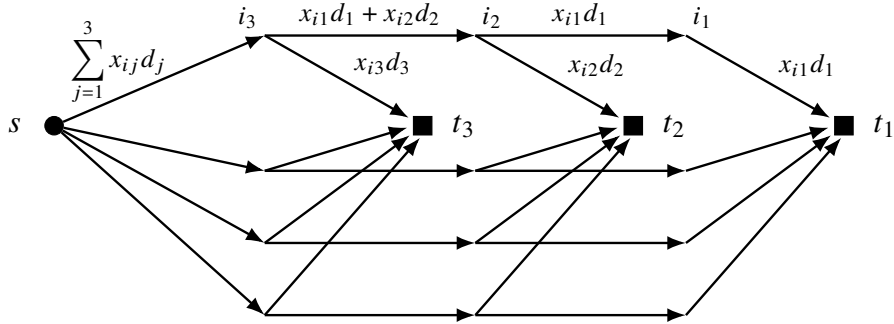

To see the connection with Theorem~\ref{thm:WCAP}, note that given positive integers $m, n$ we may construct a directed acyclic graph consisting of $m$ paths of length $n$ starting from $s$, with edges pointing at $n$ terminals along the way. Then any fractional assignment $x \in [0,1]^{m \times n}$ and weights $d \in \R^{n}_{>0}$ induce a flow so that each terminal $t_j$ has demand $d_j$ and each prefix is captured by one of the arcs in a path, as illustrated in Figure \ref{fig:ssuf}. An unsplittable flow routes each demand $d_j$ to $t_j$ through a path $i\in[m]$, which corresponds to assigning $j$ to $i$. Assuming Conjecture \ref{conj:ssuf}, such an assignment $y$ satisfies $|\sum_{j\in[t]} d_j(x_{ij}-y_{ij})|\le \max_{j\in[n]} d_j$ for all $i \in [m], t \in [n]$. 

A variant of the chairman assignment problem is the \emph{carpooling problem}~\cite{FaginWilliams1983IBM, AjtaiAspnesNaorRabaniSchulmanWaarts1995SODA}, where $j\in[n]$ can be assigned to $i\in [m]$ if and only if $x_{ij}\neq 0$. The \emph{weighted carpooling problem}, which remains open, can also be viewed as a special case of Conjecture \ref{conj:ssuf} using a similar construction that connects terminal $t_j$ to paths $i$ for which $x_{ij}\neq 0$:
\begin{conjecture}\label{conj:weighted_carpool}
   Let $m, n$ be positive integers and $d \in \R^n_{>0}$. For every fractional assignment $x \in [0,1]^{m \times n}$, there is an assignment $y \in \{0,1\}^{m \times n}$ so that $y_{ij} = 0$ whenever $x_{ij} = 0$, and for every $i\in [m]$ and $t\in [n]$,
    \[
    \Big|\sum_{j \in [t]} d_j (x_{ij}-y_{ij}) \Big| \le  \max_{j \in [n]} d_j.
    \]
\end{conjecture}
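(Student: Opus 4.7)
The plan is to adapt the algorithm underlying Theorem \ref{thm:WCAP} so that it respects the carpooling support restriction. Process the columns $j = 1, \ldots, n$ in order, maintaining for each machine $i$ the running prefix imbalance $s_i = \sum_{j' \le t} d_{j'}(x_{ij'} - y_{ij'})$ after time $t$. At step $j$, let $S_j = \{i : x_{ij} > 0\}$ be the allowed support; apply a Tijdeman-style greedy rule restricted to $S_j$: set $y_{i^* j} = 1$ for the $i^* \in S_j$ that maximizes $s_i + d_j x_{ij}$, i.e. the largest prospective imbalance if column $j$ were \emph{not} served by machine $i$. This is the natural greedy heuristic that the support constraint forces on us.

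The target is to maintain the invariant $|s_i| \le D := \max_j d_j$ for all $i$ at all times, proved by induction on $j$. The upper-bound side should go through roughly as in the unrestricted case: if some machine's imbalance would rise past $D$, that machine must be the greedy maximizer, and whenever it lies in $S_j$ we assign $j$ to it and its imbalance strictly decreases (if $x_{ij}=0$ then the imbalance is unchanged for free). The delicate side is the lower bound $s_i \ge -D$, because the carpooling restriction forces the algorithm to pick \emph{some} $i^* \in S_j$ and reduce its imbalance by $d_j(1 - x_{i^* j})$, whereas the analysis for Theorem \ref{thm:WCAP} relies crucially on being free to leave any particular machine alone.

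The main obstacle I expect is the case where every machine in $S_j$ already has $s_i$ close to $-D$: in the unrestricted setting one can exploit a ``safety valve'' machine outside this near-minimum set, but no such machine need exist in $S_j$. To rescue the invariant, I would augment the greedy step with a reassignment procedure on already-committed columns: trace an alternating chain $j \to i^* \to j' \to i' \to \cdots$ and swap prior assignments to rebalance the $s_i$'s, in the spirit of the Dinitz, Garg, and Goemans augmenting-path algorithm for single-source unsplittable flow. Proving that such a chain always exists, terminates, and remains within the support of $x$ for every intermediate column is the crux of the matter. A parallel route would be to attack Conjecture \ref{conj:ssuf} directly on the DAG of Figure \ref{fig:ssuf} with terminal $t_j$ connected only to machines $i \in S_j$, and try to exploit its layered parallel-paths structure to derive the two-sided bound; however, since both the general Morell--Skutella conjecture and even the unweighted carpooling bound of $1$ remain open, I expect any successful attack to pass through a genuinely new combinatorial exchange argument rather than a direct generalization of the authors' analysis.
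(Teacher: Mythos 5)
You have not proved the statement, and indeed the paper does not either: Conjecture~\ref{conj:weighted_carpool} is stated as an \emph{open} problem, and the best bounds recorded in the paper are $\min\{2m, O(\sqrt{\log n})\}\cdot \max_j d_j$. Your proposal is a research plan whose crucial step is explicitly left unresolved, and that step is exactly where the difficulty lives. The greedy rule restricted to the support $S_j$ does not maintain the lower bound $s_i \ge -\max_j d_j$ (you concede this), and the proposed repair --- retroactively swapping earlier assignments along an alternating chain in the spirit of Dinitz--Garg--Goemans --- faces a structural obstruction you do not address: any swap at an earlier column changes the prefix imbalances at \emph{every} intermediate time from that column onward, so even if a chain exists and terminates, you must re-certify the two-sided bound at all prefixes, not just at the current time. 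The DGG-style augmentation only controls final (one-sided) arc values, which is why it yields the known one-sided bound but not the conjecture. Nothing in your sketch supplies the existence, termination, or prefix-safety of the exchange procedure, so the invariant is not established.

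Two further points to calibrate the plan. First, Proposition~\ref{prop:carlb} in the paper shows the conjectured bound is tight: no $(1-\delta)\cdot\max_j d_j$ guarantee is possible even for $m=3$. Hence any analysis in the style of Theorem~\ref{thm:WCAP}, which fundamentally exploits the slack $\eps = \frac{1}{2m-2}$ (the ``safety valve'' you mention), cannot simply be transplanted; a successful argument must work with zero slack. Second, your remark that ``even the unweighted carpooling bound of $1$ remains open'' is incorrect: with uniform weights the claim follows from integrality of the flow polytope with capacities $\lfloor x\rfloor \le y \le \lceil x\rceil$, as the paper notes for the uniform-demand case of Conjecture~\ref{conj:ssuf}; it is precisely the nonuniform weights that defeat the LP argument. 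So the open content is narrower than you state, but your attempt does not close it.
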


Morell and Skutella~\cite{MorellSkutella2020IPCO} outlined the connection to maximum flow-time minimization and noted that Conjecture~\ref{conj:weighted_carpool} would imply a 3-approximation for the restricted assignment setting. The best known approximation in polynomial time is $O(\log n)$ due to Bansal and Kulkarni via iterated rounding~\cite{BK15}. Bansal, Rohwedder and Svensson~\cite{BansalRohwedderSvensson2022STOC} prove a bound of $O(\sqrt{\log n})\cdot  \max_{j \in [n]} d_j$ for Conjecture \ref{conj:weighted_carpool}, leveraging a non-constructive argument from convex geometry by Banaszczyk \cite{banaszczyk2012series}, thereby showing that the natural linear programming relaxation for maximum flow-time minimization has integrality gap $O(\sqrt{\log n})$. For identical machines, a simple 3-approximation is known~\cite{BenderChakrabartiMuthukrishnan1998SODA, Mastrolilli2004IJFCS}: sort the jobs by release times and assign them one by one to the machine with the least remaining processing time of jobs that have been assigned to it. The special case where release times are all zero corresponds to makespan minimization; the work of Lenstra, Shmoys and Tardos gives a 2-approximation algorithm via rounding~\cite{lenstra1990approximation}. Their approach also solves the Morell-Skutella conjecture for digraphs of diameter 2.

To the best of our knowledge, the weighted chairman assignment problem has not been studied before. The more general weighted carpooling problem fits within the broader framework of \emph{prefix discrepancy}~\cite{BansalRohwedderSvensson2022STOC}, and it is possible to derive a bound of $2m \cdot \max_{j \in [n]} d_j$ for Conjecture \ref{conj:weighted_carpool} via a linear algebraic argument~\cite{BaranyGrinberg1981,Barany2010}. Together with Banaszczyk's result \cite{banaszczyk2012series}, the current best bound for Conjecture \ref{conj:weighted_carpool} is $\min\{2m, O(\sqrt{\log n})\} \cdot \max_{j \in [n]} d_j$. The special case of Conjecture~\ref{conj:weighted_carpool} where each column $\{x_{ij}\}_{i\in [m]}$ has at most two nonzero entries is known as the \emph{2-sparse prefix Beck-Fiala problem} and is equivalent to the general case up to a constant~\cite{BansalRohwedderSvensson2022STOC}. 
Proposition~\ref{prop:carlb} shows that unlike Theorem~\ref{thm:WCAP}, we cannot hope for a $(1-\delta) \cdot \max_{j \in [n]} d_j$ bound for Conjecture~\ref{conj:weighted_carpool} for any $\delta > 0$ even for $m=3$.

For the online version of the chairman assignment problem, where columns of $x$ arrive one at a time and the assignment must be decided irrevocably, it is known that the best possible bound is $\sum_{j=2}^m \tfrac{1}{j} = \Theta(\log m)$ for adaptive adversaries, attainable with a simple greedy algorithm~\cite{CoppersmithNowickiPaleologoTresserWu2011}, and the analysis readily generalizes for the weighted setting. On the other hand, the best possible bound independent of $n$ for the online version of the carpooling problem is $\frac{m-1}{2}$~\cite{CoppersmithNowickiPaleologoTresserWu2011}. For the online 2-sparse prefix Beck-Fiala problem, there is also an $O(\sqrt{\log n})$ bound~\cite{KRR24} and a $\Omega(\sqrt[3]{\log n})$ lower bound~\cite{AjtaiAspnesNaorRabaniSchulmanWaarts1995SODA} for oblivious adversaries. A recent line of work studies the online carpooling problem with \emph{recourse}~\cite{GuptaGurunathanKrishnaswamyKumarSingla2022, EfronPatelStein2025}, and it remains open to obtain a constant bound for the online chairman assignment problem with polylogarithmic recourse. 

\section{Proof of Theorem \ref{thm:WCAP}}\label{sec:proof}

Given $m,n \in \mathbb{N}$ with $m  > 1$, denote $\eps := \frac{1}{2m-2}$. Let $x \in [0,1]^{m \times n}$ be a fractional assignment. We normalize $d\in \R_{>0}^n$ so that $\max_{j \in [n]} d_j=1$, thereby assuming $d\in(0,1]^n$. For $j\in [n]$, denote by $s_j\in[m]$ the element to which $j$ will be assigned, so that $y_{ij}=1$ if $s_j=i$ and $0$ otherwise. For each $i \in [m]$ and $t \in [n]$, denote
\[
P_t(i):=\sum_{j=1}^t d_j x_{ij}, \quad N_t(i):=\sum_{j=1}^t d_j y_{ij},\quad \Delta_t(i) := P_t(i) - N_t(i).
\]
At time $t$, the \emph{deadline} of $i$ is defined as
\begin{equation}\label{eq:ddl}
    D_t (i) := \min\{T \ge t : P_T(i)\ge N_{t-1}(i)+1-\eps\},
\end{equation}

which is set to $\infty$ (or $n+1$) if such $T$ does not exist.

The set of \emph{candidates} at time $t$ is defined as 
\begin{equation}\label{eq:candidate}
    C(t):=\big\{i\in[m]: P_t(i)\ge N_{t-1}(i)+\min\{d_t/m,\eps\}\big\}.
\end{equation}
At time $t$, the algorithm chooses a candidate with the earliest deadline. The detailed description is in Algorithm \ref{alg:deadline}. We remark that if we only wanted a bound of $\max_{j\in [n]} d_j$ in Theorem \ref{thm:WCAP}, simpler definitions of deadline and candidates where setting $\eps=0$ in both definitions would have been sufficient.

\begin{algorithm}[H]
\caption{\textsc{Earliest Deadline Algorithm}}\label{alg:deadline}
\begin{algorithmic}
\Require Fractional assignment $x \in [0,1]^{m \times n}$, $d \in (0,1]^n$
\Ensure Assignment $y \in \{0,1\}^{m \times n}$ such that 
        $|\Delta_t(i)| \le 1-\eps$ for all $i \in [m],\ t \in [n]$

\For{$t \in [n]$}
    \State $C(t) \gets \{ i \in [m] : P_t(i) \ge N_{t-1}(i) + \min\{ d_t/m, \eps \} \}$ \Comment{candidates}
    \For{$i \in [m]$}
        \State $D_t(i) \gets \min \{ T \ge t : P_T(i) \ge N_{t-1}(i) + 1 - \eps \}$ \Comment{deadlines}
    \EndFor
    \State Choose $s_t \in \arg\min_{i \in C(t)} D_t(i)$
    \State $y_{s_t, t} \gets 1$ and $y_{i, t} \gets 0$ for $i \in [m] \setminus \{s_t\}$
\EndFor

\State Output $y$
\end{algorithmic}
\end{algorithm}

The feasibility of the algorithm follows from the lemma below.
\begin{lemma}
   For any $t \in [n]$, there always exists a candidate, i.e., $C(t)\neq \emptyset$.
\end{lemma}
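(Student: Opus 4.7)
The key identity is that the total "slack" $\sum_{i \in [m]} (P_t(i) - N_{t-1}(i))$ telescopes cleanly to $d_t$. Indeed, since $x$ is a fractional assignment, the columns of $x$ sum to $1$, giving $\sum_i P_t(i) = \sum_{j=1}^t d_j$. For the first $t-1$ columns, $y$ has already been chosen to be a (partial) integral assignment whose columns sum to $1$, so $\sum_i N_{t-1}(i) = \sum_{j=1}^{t-1} d_j$. Subtracting gives $\sum_i (P_t(i) - N_{t-1}(i)) = d_t$.

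The plan is then a one-line averaging argument. Since the $m$ nonnegative-looking quantities $P_t(i) - N_{t-1}(i)$ sum to $d_t$, at least one of them is at least $d_t/m$, and a fortiori at least $\min\{d_t/m, \eps\}$. This index $i$ satisfies the defining inequality of $C(t)$, so $C(t) \neq \emptyset$.

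The only thing one should double-check is the sign: a priori $P_t(i) - N_{t-1}(i)$ could be negative for some $i$, but this only strengthens the conclusion that the maximum is at least the average $d_t/m$. I do not expect any real obstacle here; the lemma is essentially a sanity check that the threshold $\min\{d_t/m, \eps\}$ used in \eqref{eq:candidate} was chosen precisely so that a simple averaging bound suffices. The more substantial work (showing that choosing the earliest-deadline candidate keeps $|\Delta_t(i)| \le 1 - \eps$) is deferred to later lemmas.
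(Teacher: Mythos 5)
Your argument is correct and is essentially identical to the paper's own proof: both compute $\sum_{i\in[m]}\big(P_t(i)-N_{t-1}(i)\big)=d_t$ by using that the columns of $x$ sum to $1$ and the first $t-1$ columns of $y$ sum to $1$, then conclude by averaging that $\max_i\{P_t(i)-N_{t-1}(i)\}\ge d_t/m\ge\min\{d_t/m,\eps\}$. No gaps.
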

\begin{proof}
    Since 
    \[
    \begin{aligned}
        \sum_{i=1}^m \big(P_t(i) - N_{t-1}(i)\big) =&  \sum_{i=1}^m\Big(\sum_{j=1}^t d_jx_{ij}-\sum_{j=1}^{t-1}d_jy_{ij}\Big)\\
        =& \sum_{j=1}^{t-1}d_j\Big(\sum_{i=1}^mx_{ij}- \sum_{i=1}^my_{ij}\Big)+d_t\sum_{i=1}^mx_{ij}\\
        =& \ d_t,
    \end{aligned}
    \]
    it follows that a candidate always exists as

\[
\max_{i\in[m]} \big\{P_t(i) - N_{t-1}(i)\big\}\ge d_t/m\ge \min\{d_t/m,\eps\}. \qedhere
\]
\end{proof}

The following lemma justifies the definition of candidates:

\begin{lemma}\label{lem:lowerbound}
For any $i \in [m]$ and $t \in [n]$, $\Delta_t (i) \ge -1+\eps$.
\end{lemma}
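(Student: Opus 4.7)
The plan is to prove the lemma by induction on $t$, with the base case $\Delta_0(i) = 0 \ge -1 + \eps$ being immediate. For the inductive step, I split on whether $i$ is selected at time $t$.

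If $s_t \neq i$, then $N_t(i) = N_{t-1}(i)$ while $P_t(i) = P_{t-1}(i) + d_t x_{it} \ge P_{t-1}(i)$, so $\Delta_t(i) \ge \Delta_{t-1}(i)$ and the inductive hypothesis applies directly. The only case that actually requires work is when $s_t = i$, in which case $N_t(i) = N_{t-1}(i) + d_t$ and so $\Delta_t(i) = P_t(i) - N_{t-1}(i) - d_t$.

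The key observation is that if $s_t = i$ then $i$ must be a candidate at time $t$, which by \eqref{eq:candidate} means $P_t(i) - N_{t-1}(i) \ge \min\{d_t/m,\eps\}$. Substituting this into the expression above yields $\Delta_t(i) \ge \min\{d_t/m, \eps\} - d_t$, and I finish with a two-case analysis on which term achieves the minimum. When $d_t \ge \eps m$ the minimum is $\eps$, giving $\Delta_t(i) \ge \eps - d_t \ge \eps - 1$ since $d_t \le 1$. When $d_t < \eps m$ the minimum is $d_t/m$, giving $\Delta_t(i) \ge -d_t \cdot (m-1)/m$; bounding $d_t < \eps m$ and simplifying $\eps(m-1) = 1/2$, this is strictly greater than $-1/2 \ge -(1-\eps)$ for $m \ge 2$.

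The proof is essentially immediate from the candidate definition once written down; there is no real obstacle. In particular, the argument uses neither the deadline function nor the earliest-deadline tiebreak rule, which reflects the fact that the threshold $\min\{d_t/m, \eps\}$ in the candidate set was tailored precisely to make this one-sided bound fall out automatically. The harder direction, controlling $\Delta_t(i)$ from above by $1-\eps$, is where deadlines will play their role.
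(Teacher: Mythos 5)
Your proof is correct and follows essentially the same route as the paper: the paper skips the induction by jumping directly to the latest time $j$ with $s_j = i$ (using $P_t(i) \ge P_j(i)$ and monotonicity of $\min\{d/m,\eps\}-d$ in $d$ instead of your two-case analysis), but the key step is identical — the candidate condition at the selection time gives $\Delta_t(i) \ge \min\{d/m,\eps\} - d \ge -1+\eps$. Your closing remark is also accurate: neither proof uses the deadline rule, which only enters in the upper bound.
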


\begin{proof}
If $i$ has not been selected at any time up to $t$, then $N_t(i) = 0$ and \[\Delta_t(i) = P_t (i) \ge 0 > -1+\eps.\] Otherwise, let $j \in [t]$ be the latest time such that $s_j=i$. Then \begin{align*}\Delta_t(i) &= P_t(i)-N_t(i) \\ &= P_t(i)-N_{j-1}(i)-d_j \\ &\ge P_j(i)-N_{j-1}(i)-d_j \\&\ge \min\{d_j/m,\eps\} -d_j \\ &\ge \min\{1/m, \eps\} -1 \\ &= -1+\eps,\end{align*} where the second inequality follows from the fact that $i$ is a candidate at time $j$.
\end{proof}

Picking a candidate with the earliest deadline also ensures the upper bound:

\begin{lemma} \label{lem:upperbound}
For any $i \in [m]$ and $t \in [n]$, $\Delta_t (i) \le 1-\eps$.
\end{lemma}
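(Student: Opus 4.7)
The plan is to proceed by contradiction using induction on $t$: assume the statement fails, and let $t$ be the earliest time for which some machine $i$ has $\Delta_t(i) > 1 - \eps$.

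First, I reduce to a canonical situation at time $t$. If $s_t = i$, then $\Delta_t(i) = \Delta_{t-1}(i) - d_t(1 - x_{it}) \le \Delta_{t-1}(i) \le 1 - \eps$ by induction, a contradiction. If $i \notin C(t)$, then $\Delta_t(i) = P_t(i) - N_{t-1}(i) < \min\{d_t/m, \eps\} \le \eps \le 1 - \eps$, again a contradiction. Hence $s_t \ne i$ and $i \in C(t)$. Since $P_t(i) > N_{t-1}(i) + 1 - \eps$ we get $D_t(i) = t$, and because the algorithm preferred $s_t$ over $i$ among candidates, $D_t(s_t) \le D_t(i) = t$, which forces $P_t(s_t) \ge N_{t-1}(s_t) + 1 - \eps$.

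Next, I build a set of ``co-urgent'' machines by walking backward in time. Let $\tau \in [1,t]$ be the earliest time such that $i \in C(\sigma)$ for every $\sigma \in [\tau, t]$ (this is well-defined since $\sigma = t$ works). Since $N_{\sigma-1}(i) \le N_{t-1}(i)$ and $P_t(i) > N_{t-1}(i) + 1 - \eps$, we have $D_\sigma(i) \le t$ throughout. Thus for each $\sigma \in [\tau, t]$ either $s_\sigma = i$ or some $s_\sigma \ne i$ is chosen over $i$ with $D_\sigma(s_\sigma) \le t$. Let $S := \{i\} \cup \{s_\sigma : \sigma \in [\tau, t]\}$; by monotonicity of the set $\{k : D_\sigma(k) \le t\}$ in $\sigma$, every $k \in S$ satisfies $P_t(k) \ge N_{\tau-1}(k) + 1 - \eps$, with strict inequality for $k = i$, and every selection in $[\tau, t]$ lies in $S$.

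Finally, I derive a contradiction by summing. Let $\Sigma := \sum_{k \in S}(P_t(k) - N_{\tau-1}(k))$; by the previous paragraph, $\Sigma > |S|(1 - \eps)$. Expanding using $P_t(k) = P_{\tau-1}(k) + \sum_{j=\tau}^{t} d_j x_{kj}$ gives
\[
\Sigma = \sum_{k \in S} \Delta_{\tau-1}(k) + \sum_{j=\tau}^{t} d_j \sum_{k \in S} x_{kj},
\]
where the second sum is at most $\sum_{j=\tau}^t d_j$ since $\sum_k x_{kj} = 1$. For the first sum, if $\tau = 1$ it vanishes; otherwise by minimality of $\tau$ we have $i \notin C(\tau - 1)$, and since this also forces $s_{\tau-1} \ne i$, we obtain $\Delta_{\tau-1}(i) < \min\{d_{\tau-1}/m, \eps\}$. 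Combined with $\Delta_{\tau-1}(k) \le 1 - \eps$ for $k \ne i$ (inductive hypothesis), this bounds the first sum, and the two pieces are meant to combine to contradict $\Sigma > |S|(1-\eps)$.

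The main obstacle is Stage 3. The naive bounds above give only $\sum_{j=\tau}^t d_j > 1 - 2\eps$, which is not a contradiction on its own. The real work will be to carefully relate the fractional mass $\sum_{k \in S} x_{kj}$ to the integer selection $[s_j \in S]$ on columns $j \in [\tau, t]$, and to pick the set $S$ or the interval more cleverly -- perhaps taking $S = \{k : D_\tau(k) \le t\}$, which may be strictly larger than $\{i\} \cup \{s_\sigma\}$ and whose extra elements $k$ would satisfy $\Delta_t(k) \ge 1 - \eps$ and thereby contribute further slack. The essential role of the ``$\min\{d_t/m, \eps\}$'' term in the candidate definition -- absent in Tijdeman's unweighted analysis -- is precisely to drive $\Delta_{\tau-1}(i)$ strictly below $\eps$ at the boundary, providing the $\eps$ of headroom needed to tighten the bound from $1$ down to $1 - \eps$.
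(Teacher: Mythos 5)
Your Stages 1--2 are sound, but the proof is not complete, and you say so yourself: the concluding inequality in Stage~3 only yields $\sum_{j=\tau}^{t} d_j > 1-2\eps$, which is vacuous since the interval can carry arbitrarily large weight. The reason the slack is insufficient is structural: your interval boundary $\tau$ is defined through the candidacy of the \emph{single} machine $i$, so the only boundary information you get is $\Delta_{\tau-1}(i) < \eps$; for every other $k \in S$ you can only invoke the inductive bound $\Delta_{\tau-1}(k) \le 1-\eps$, which gives essentially no gain per machine. Your closing suggestion (take $S=\{k : D_\tau(k)\le t\}$) does not fix this either, because nothing in your choice of $\tau$ prevents those extra machines from having been candidates near $\tau$, so you still cannot push their boundary discrepancies below $\eps$.

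The missing idea, which is how the paper closes the argument, is to define the boundary through the \emph{selections} rather than through $i$: let $t' := \max\{j \in [t] : P_t(s_j) < N_{j-1}(s_j) + 1-\eps\}$, the last time the algorithm picked a machine whose deadline (as computed then) exceeded $t$, and let $I := \{k : P_t(k) \ge N_{t'-1}(k) + 1-\eps\}$ be the machines whose deadline at time $t'$ is at most $t$. The earliest-deadline rule then forces $I \cap C(t') = \emptyset$, which gives the boundary bound $P_{t'}(k) < N_{t'-1}(k) + \eps$ \emph{simultaneously for every} $k \in I$ -- this is exactly the per-machine $\eps$-headroom your construction obtains only for $i$. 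Combined with $P_t(k) \ge N_t(k) - \eps$ for $k \in I$ (maximality of $t'$), $I^+ \subseteq I$ (giving one strict inequality), the fact that every selection in $(t',t]$ lies in $I$, and $|I| \le m-1$ (since $s_{t'} \notin I$, which is where $1-2\eps|I|\ge 0$ enters), one counts the mass $\sum_{j\in(t',t]} d_j$ once via $\sum_{k\in I}(N_t(k)-N_{t'-1}(k))$ and once via $\sum_{k\in[m]}(P_t(k)-P_{t'}(k))$ and obtains the strict contradiction $\sum_{j\in(t',t]} d_j < \sum_{j\in(t',t]} d_j$. Without this choice of $t'$ and the resulting claim that \emph{all} of $I$ fails candidacy at $t'$, the quantitative core of the lemma is missing, so as it stands the proposal is a plan with the decisive step unproven.
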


\begin{proof}
Define $I^{+} := \{ i \in [m] : \Delta_t (i) > 1-\eps\}$ and assume towards contradiction that $I^{+} \neq \emptyset$. We will need to study a carefully chosen time interval $(t',t]$ and the set of rows with deadline at most $t$ at time $t'$ to derive a contradiction. Define
$$t' := \max\{j \in [t]: P_t (s_j) < N_{j-1} (s_j) + 1- \eps\},$$
which by \eqref{eq:ddl} is the first time that we pick some row with deadline later than $t$.
\begin{claim}
    The time $t'$ is well-defined. 
\end{claim}
\begin{claimproof}
    Let $i_0\in\arg\min_{i\in [m]}\Delta_t(i)$.
    Then, since $\sum_{i\in[m]}\Delta_t(i)=0$ and $I^+\neq \emptyset$, 
     \[
     \Delta_t(i_0) < -\frac{1-\eps}{|[m] \setminus I^+|} \le  -\frac{1-\eps}{m-1} \le -\eps,
     \]
     
     where the last inequality holds for $m \ge 2$. Obviously $i_0$ has been selected by time $t$, since otherwise $\Delta_t (i_0) = P_t(i_0) \ge 0$. As such let $t_0 \in [t]$ be the latest time with $s_{t_0} = i_0$, and thus
     $$P_t(i_0) < N_t (i_0) - \eps= N_{t_0}(i_0) - \eps = N_{t_0 -1}(i_0) + d_{t_0} - \eps \le N_{t_0 -1}(i_0) +1 - \eps.$$
    In particular, $t' \ge t_0$ is well-defined.
\end{claimproof}

    Define $$I:=\{i\in [m]: P_t(i)\ge N_{t'-1}(i)+1-\eps\},$$
    which by \eqref{eq:ddl} are precisely the rows with deadlines at most $t$ at time $t'$. 
    \begin{claim}\label{claim:pick-in-I}
        $s_{t'} \not \in I$ and $s_j \in I$ for every $j \in (t',t]$.
    \end{claim}
    \begin{claimproof}
        It follows from definitions of $t'$ and $I$ that $s_{t'} \not \in I$. For an arbitrary $j\in(t',t]$, it follows from the maximality of $t'$ that $$P_t(s_j) \ge N_{j-1}(s_j)+1-\eps \ge N_{t'-1} (s_j) + 1-\eps.$$
        Therefore, $s_j\in I$.
    \end{claimproof}
    \begin{claim}\label{claim:I-nonempty}
        $I^+ \subseteq I$; in particular $I \neq \emptyset$.
    \end{claim}
\begin{claimproof}
    For every $i\in I^+$, $$P_t(i) > N_{t}(i)+1-\eps \ge  N_{t'-1}(i) +1-\eps,$$ which implies $i\in I$. Thus, $I^+\subseteq I$.
\end{claimproof}
\begin{claim}\label{claim:bound-t'}
    $P_{t'}(i)<N_{t'-1}(i)+\eps$ for every $i \in I$.
\end{claim}
\begin{claimproof}

    By the definition of $t'$ and deadline, $D_{t'}(s_{t'})>t$, which means $s_{t'} \not \in I$. By the choice of $s_{t'}$ in the algorithm, all candidates $i\in C(t')$ satisfy $D_{t'}(i)>t$. In other words, $I \cap C(t') = \emptyset$. It follows from  \eqref{eq:candidate} that for every $i\in I$, \[P_{t'}(i)<N_{t'-1}(i)+\min\{d_{t'}/m,\eps\}\le N_{t'-1}(i)+\eps.\qedhere\]
\end{claimproof}

\begin{claim}\label{claim:bound-t}
    $P_t(i)\ge N_t(i)-\eps$ for every $i \in I$.
\end{claim}
\begin{claimproof}
    For an arbitrary $i\in I$, if there does not exist $j \in (t', t]$ with $s_j=i$, then 
    \[P_t(i) \ge N_{t'-1}(i)+1-\eps  =N_t(i)+1-\eps  >N_t(i)-\eps,\]
    where we used the fact that $s_{t'}\neq i$ as $s_{t'}\notin I$. Otherwise, let $j \in (t', t]$ be the latest time with $s_j=i$. Then by the maximality of $t'$, we have \[P_t(i) \ge N_{j-1}(i) + 1 - \eps \ge N_{j-1}(i) + d_j - \eps = N_j(i) -\eps = N_t(i) - \eps.\qedhere\]\end{claimproof}

     
    
Finally, $|I| \le m-1$ since by Claim \ref{claim:I-nonempty}, $I \subseteq [m] \setminus \{s_{t'}\}$. In other words, $0 \le 1 - 2\eps |I|.$ Putting everything together,

\begin{align*}
   \sum_{j \in (t', t]} d_j 
   &\le 1 - 2\eps |I| + \sum_{j \in (t', t]} d_j\\ 
   &= 1 - 2\eps|I| + \sum_{i\in I} (N_t(i) -N_{t'-1}(i)) \tag{\text{Claim} \ref{claim:pick-in-I}} \\
   &=1 + \sum_{i\in I} (N_t(i)-\eps -N_{t'-1}(i)-\eps)\\ 
   &< \sum_{i\in I} (P_t(i)-N_{t'-1}(i)-\eps) \tag{\text{Claim} \ref{claim:bound-t}+\text{Claim} \ref{claim:I-nonempty}}\\ 
    &< \sum_{i\in I} (P_t(i)-P_{t'}(i)) \tag{\text{Claim} \ref{claim:bound-t'}+\text{Claim} \ref{claim:I-nonempty}}\\
    &\le \sum_{i \in [m]} (P_t(i)-P_{t'}(i))\\
    &=\sum_{j \in (t', t]} d_j,
\end{align*}
This is a contradiction, from which we conclude $I^+ = \emptyset$. \qedhere
\end{proof}

\begin{proof}[Proof of Theorem \ref{thm:WCAP}]
The bounds follow from Lemmas~\ref{lem:lowerbound} and~\ref{lem:upperbound}. It remains to show that the algorithm can be implemented in linear time. We may precompute $P_t (i)$ for all $t \in [n], i \in [m]$ via prefix sum, and we may update $N_t(i)$ for all $i \in [m]$ after the $t$-th iteration. The deadlines $D_t (i)$ may also be computed in linear amortized time by incrementing them each iteration until $P_T(i) \ge N_{t-1}(i) + 1-\eps$ as $T \le n$. 
\end{proof}

\section{Application for flow-time scheduling}\label{sec:schedule}

Let $M= [m]$ be a set of machines, and $J=[n]$ be a set of jobs. Each job $j\in J$ can be scheduled on a subset of machines $M_j\subseteq M$. Each job $j\in [n]$ has a processing time $d_j$ and release time $r_j$. Let $C_j$ be the completion time of $j$. Order the jobs by their release times $r_1\le r_2\le\cdots \le r_n$. We want to schedule each job to a machine such that the maximum flow-time $\max_{j\in[n]} (C_j-r_j)$ is minimized. Consider the following linear programming relaxation for this problem:
\begin{equation}\label{eq:max_flow-time}
\begin{aligned}
\min~&\ T\\
s.t.~ &\ \sum_{i=1}^m x_{ij}=1\ \forall j\in [n]\\
&\sum_{j=s}^t x_{ij}d_{j}\le (r_t-r_s)+T,\ \forall i\in [m],\ 1\le s\le t\le n\\
&x\ge 0\\
&x_{ij}=0,\ \forall i,j: i\notin M_j.
\end{aligned}
\end{equation}

Morell and Skutella \cite{MorellSkutella2020IPCO} show that Conjecture \ref{conj:weighted_carpool} would imply the integrality gap of \eqref{eq:max_flow-time} is at most $3$. Efficiently computing such an assignment $y$ would lead to a $3$-approximation algorithm for maximum flow-time scheduling. It follows readily from their proof that upper-bounding $\sum_{j=s}^t d_j(y_{ij}-x_{ij})$ for every machine $i$ and interval of jobs defined by $1\le s\le t\le n$ suffices to prove the desired approximation ratio. We summarize this in the following lemma and give its proof for completeness.
\begin{lemma}[\cite{MorellSkutella2020IPCO}]\label{lemma:flow-time}
    Let $m, n$ be positive integers and $d \in \R^n_{>0}$. If for every fractional assignment $x \in [0,1]^{m\times n}$, there is an assignment $y\in \{0,1\}^{m\times n}$ so that $y_{ij} = 0$ whenever $x_{ij} = 0$ and
    \begin{equation}\label{eq:interval_disc}
        \sum_{j=s}^t d_j(y_{ij}-x_{ij})\le \alpha\cdot \max_{j\in [n]} d_j\quad \forall i\in[m], 1\le s\le t\le n,
    \end{equation}
    then the integrality gap of the linear programming relaxation \eqref{eq:max_flow-time} is at most $(\alpha+1)$. A polynomial time algorithm to compute such an assignment $y$ would lead to an $(\alpha+1)$-approximation algorithm for maximum flow-time scheduling.
\end{lemma}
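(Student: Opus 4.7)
The plan is to solve the LP relaxation~\eqref{eq:max_flow-time} to obtain an optimal fractional assignment $x^*$ with value $T^*$, invoke the hypothesis to round $x^*$ to an integral assignment $y$ satisfying~\eqref{eq:interval_disc}, and then process the jobs assigned to each machine nonpreemptively in FIFO order of release times. The feasibility of the assignment for the restricted setting comes for free from $y_{ij}=0$ whenever $x^*_{ij}=0$, since the LP forces $x^*_{ij}=0$ for $i\notin M_j$.

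Fix a machine $i$ and let $J_i := \{j : y_{ij}=1\}$, scheduled in release-time order. A standard computation for a single machine with release times shows that the completion time of any $t\in J_i$ equals
\[
C_t \;=\; \max_{s \in J_i,\, s \le t}\Big( r_s + \sum_{\substack{j \in J_i \\ s \le j \le t}} d_j \Big),
\]
so the flow-time of $t$ equals $\max_{s\le t}\bigl(\sum_{j=s}^{t} y_{ij} d_j - (r_t - r_s)\bigr)$. I would then split
\[
\sum_{j=s}^t y_{ij}\, d_j \;=\; \sum_{j=s}^t x^*_{ij}\, d_j \;+\; \sum_{j=s}^t (y_{ij}-x^*_{ij})\, d_j,
\]
bound the first summand by $(r_t-r_s)+T^*$ using the LP constraint in~\eqref{eq:max_flow-time}, and the second by $\alpha\cdot\max_j d_j$ using~\eqref{eq:interval_disc}. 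This yields flow-time at most $T^* + \alpha\cdot\max_j d_j$ for every job.

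Finally, I would upgrade this additive bound to an $(\alpha+1)$-approximation by noting that the integer optimum $\mathrm{OPT}$ is lower bounded both by $T^*$ (LP is a relaxation) and by $\max_j d_j$ (any job's flow-time is at least its processing time), so
\[
T^* + \alpha\cdot \max_j d_j \;\le\; (\alpha+1)\cdot\max\bigl(T^*,\,\max_j d_j\bigr)\;\le\;(\alpha+1)\,\mathrm{OPT}.
\]
For the integrality-gap statement, one may WLOG add the valid constraint $T \ge \max_j d_j$ to~\eqref{eq:max_flow-time} (it is trivially satisfied by every integer-feasible solution and hence does not change the integer optimum); with this addition $T^* \ge \max_j d_j$ and the flow-time bound becomes $(\alpha+1)T^*$.

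The whole argument is essentially bookkeeping; the only non-routine step is convincing oneself of the completion-time formula for FIFO on a single machine. The main conceptual point is the clean decomposition $y = x^* + (y-x^*)$, which cleanly separates the LP feasibility bound on $x^*$ from the rounding error controlled by~\eqref{eq:interval_disc}.
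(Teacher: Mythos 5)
Your proposal is correct and follows essentially the same argument as the paper: round the LP optimum, run FIFO per machine, and bound each job's flow-time by splitting $\sum_{j=s}^t y_{ij}d_j$ into the LP term (at most $(r_t-r_s)+T$) plus the interval-discrepancy term (at most $\alpha\cdot\max_j d_j$), finishing with $T\le\OPT$ and $\max_j d_j\le\OPT$. Your completion-time formula $C_t=\max_{s}\bigl(r_s+\sum_{j=s}^t y_{ij}d_j\bigr)$ is just an equivalent restatement of the paper's choice of $s$ as the start of the last idle-to-busy transition, and your remark about adding $T\ge\max_j d_j$ for the integrality-gap phrasing is a minor (and harmless) tightening of the same bookkeeping.
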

\begin{proof}
Denote $d_{\max}:=\max_{j\in[n]}d_j$. Let $(x,T)$ be the optimal solution to the linear program \eqref{eq:max_flow-time}. Let $\OPT$ be the minimum maximum flow-time. Suppose $y$ is the assignment satisfying \eqref{eq:interval_disc}. Let $i\in [m]$ and $1\le s\le t\le n$. The total processing time of jobs assigned to machine $i$ and released between $r_s$ and $r_t$ is
\[
\begin{aligned}
\sum_{j=s}^t y_{ij}d_j\le& \sum_{j=s}^t x_{ij}d_j+\alpha\cdot d_{\max}\\
\le& (r_t-r_s)+T+\alpha\cdot \OPT\\
\le& (r_t-r_s)+(1+\alpha)\OPT,
\end{aligned}
\]
where the first inequality follows from \eqref{eq:interval_disc}. The second inequality follows from the fact that the job of size $d_{\max}$ has flow-time at least $d_{\max}$. The last inequality follows from the fact that $T\le \OPT$. Suppose job $t$ is scheduled on $i$. Let $r_s$ be the latest time before $t$ such that $i$ is idle when some job $s$ scheduled to $i$ is released (which exists because $i$ is idle when the first job scheduled to $i$ is released). Then, the completion time of $t$ is $C_t=r_s+\sum_{j=s}^t y_{ij}d_j$. The flow-time of $t$ is $C_t-r_t=r_s+\sum_{j=s}^t y_{ij}d_j-r_t\le (1+\alpha)\OPT$. Therefore, the maximum flow-time is at most $(1+\alpha)\OPT$.
\end{proof}

Previously, a $(3-\frac{2}{m})$-approximation algorithm was known for identical machines \cite{BenderChakrabartiMuthukrishnan1998SODA, Mastrolilli2004IJFCS}. Combining Theorem \ref{thm:WCAP} and Lemma \ref{lemma:flow-time}, we get a $(3-\frac{1}{m-1})$-approximation algorithm for this setting, matching the current best approximation ratio asymptotically. Yet, their algorithm (FIFO) follows a much simpler greedy approach, which orders jobs by their release times and assigns them online to the machine with the least remaining processing time of jobs that have been assigned to it.

\begin{figure}[htbp]
    \centering
\begin{tikzpicture}[x=0.35cm,y=1cm]
\def\rowh{1.0}

\def\Wleft{17.6}     
\def\Wright{9.6}   
\def\dx{22}        

\def\wg{2.4}       
\def\wo{3.2}       
\def\blueW{9.6}    
\def\wdots{2.4}    

\definecolor{gree}{RGB}{92,166,106}
\definecolor{oran}{RGB}{219,144,63}
\definecolor{blu}{RGB}{35,152,207}

\tikzset{
  ttl/.style={fill=orange!85!black,rounded corners=2pt,inner xsep=8pt,
              inner ysep=3pt,font=\bfseries\large,text=black},
  frac/.style={font=\normalsize},
  labc/.style={anchor=east,font=\large},
  rlab/.style={font=\normalsize},
  sep/.style={line width=0.4pt}
}

\node[rlab] at ({1},{4*\rowh+0.55}) {$r=\delta$};
\node[rlab] at ({4.2},{4*\rowh+0.55}) {$r=2\delta$};
\node[rlab]            at ({\wg+\wo+1.3},{4*\rowh+0.5}) {$\cdots$};
\node[rlab]  at ({\wg+\wo+\wdots+\blueW/2},{4*\rowh+0.55}) {$r=m\delta$};

\node[labc] at ({-0.8},{3.5*\rowh}) {$b_1=\delta$};
\node[labc] at ({-0.8},{2.5*\rowh}) {$b_2=2\delta$};
\node[labc] at ({-0.8},{1.5*\rowh}) {$\vdots$};
\node[labc] at ({-0.8},{0.5*\rowh}) {$b_m=m\delta$};

\node[frac] at ({\wg/2},{3.5*\rowh}) {$\dfrac{1}{m}$};
\draw[sep] ({\wg},{3*\rowh}) -- ({\wg},{4*\rowh});

\node[frac] at ({\wg/2},{2.5*\rowh}) {$\dfrac{1}{m}$};
\node[frac] at ({\wg+\wo/2},{2.5*\rowh}) {$\dfrac{1}{m-1}$};
\draw[sep] ({\wg},{2*\rowh}) -- ({\wg},{3*\rowh});
\draw[sep] ({\wg+\wo},{2*\rowh}) -- ({\wg+\wo},{3*\rowh});

\node[frac] at ({\wg/2},{1.5*\rowh}) {$\dfrac{1}{m}$};
\node[frac] at ({\wg+\wo/2},{1.5*\rowh}) {$\dfrac{1}{m-1}$};
\node[frac] at ({\wg+\wo+1.3},{1.5*\rowh}) {$\cdots$};
\draw[sep] ({\wg},{1*\rowh}) -- ({\wg},{2*\rowh});

\draw[sep] ({\wg+\wo},{1*\rowh}) -- ({\wg+\wo},{2*\rowh});

\node[frac] at ({\wg/2},{0.5*\rowh}) {$\dfrac{1}{m}$};
\node[frac] at ({\wg+\wo/2},{0.5*\rowh}) {$\dfrac{1}{m-1}$};
\node[frac] at ({\wg+\wo+1.3},{0.5*\rowh}) {$\cdots$};
\node[font=\Large] at ({\wg+\wo+\wdots+\blueW/2},{0.5*\rowh}) {$1$};
\draw[sep] ({\wg},{0}) -- ({\wg},{1*\rowh});
\draw[sep] ({\wg+\wo+\wdots},{0}) -- ({\wg+\wo+\wdots},{1*\rowh});

\draw[sep] ({\wg+\wo},{0}) -- ({\wg+\wo},{1*\rowh});

\draw[thick] ({0},{0}) rectangle ({\Wleft},{4*\rowh});
\foreach \r in {1,2,3} \draw[thick] ({0},{\r*\rowh}) -- ({\Wleft},{\r*\rowh});


\foreach \k in {0,1,2,3}{
  \node[frac] at ({\dx+(\k+0.5)*\wg},{3.5*\rowh}) {$\dfrac{1}{m}$};
}
\draw[sep] ({\dx+\wg},{3*\rowh}) -- ({\dx+\wg},{4*\rowh});
\draw[sep] ({\dx+2*\wg},{3*\rowh}) -- ({\dx+2*\wg},{4*\rowh});
\draw[sep] ({\dx+3*\wg},{3*\rowh}) -- ({\dx+3*\wg},{4*\rowh});

\foreach \k in {0,1,2}{
  \node[frac] at ({\dx+(\k+0.5)*\wo},{2.5*\rowh}) {$\dfrac{1}{m-1}$};
}
\draw[sep] ({\dx+\wo},{2*\rowh}) -- ({\dx+\wo},{3*\rowh});
\draw[sep] ({\dx+2*\wo},{2*\rowh}) -- ({\dx+2*\wo},{3*\rowh});

\node[frac] at ({\dx+\Wright/2},{1.5*\rowh}) {$\cdots$};

\node[font=\Large] at ({\dx+\Wright/2},{0.5*\rowh}) {$1$};

\draw[thick] ({\dx},{0}) rectangle ({\dx+\Wright},{4*\rowh});
\foreach \r in {1,2,3} \draw[thick] ({\dx},{\r*\rowh}) -- ({\dx+\Wright},{\r*\rowh});
\end{tikzpicture}
\caption{There are $m$ machines $M=[m]$ with closing times $b_i=i\cdot \delta$ for some $\delta \ll \tfrac{1}{m}$. There are $m$ batches of jobs, released at times $\delta, 2\delta,...,m\delta$. The $j$-th batch has $(m-j+1)$ jobs, each with processing time $\frac{1}{m-j+1}$. Left: FIFO schedules the $j$-th batch to machines $j,j+1,...,m$, one for each, because those are the machines that are not closed yet. The maximum flow-time is $\Big(\sum_{j\in [m]} \tfrac{1}{j}\Big) - m\delta=\Omega(\log m)$. Right: OPT schedules the $j$-th batch to machine $j$, which is feasible because the $j$-th batch is released no later than machine $j$ is closed. The maximum flow-time is $1-m\delta \le 1$.}
\label{fig:FIFO}
\end{figure}
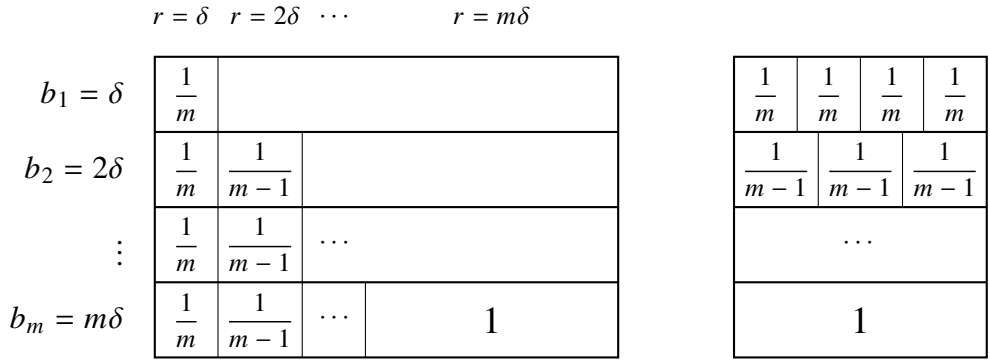

We study a variation where each machine $i$ has a closing time $b_i$, and a job can be scheduled on machine $i$ only if $r_j\le b_i$. We begin by observing that FIFO has a lower bound $\Omega(\log m)$ for the setting of closing times (see Figure \ref{fig:FIFO}).
We give a $(3-\frac{1}{m-1})$-approximation algorithm for the setting of closing times.  We start by observing that Algorithm \ref{alg:deadline} has the following property.
\begin{prop}\label{prop:open_time}
    Given $d \in \R^n_{>0}$,  $a\in [n]^m$ and a fractional assignment $x\in [0,1]^{m\times n}$ so that $x_{ij}=0\ \forall i,j: j<a_i$, there is a linear-time algorithm that computes an assignment $y\in\{0,1\}^{m\times n}$ so that $y_{ij}=0\ \forall i,j:j<a_i$ and
    \[ \Big|\sum_{j=1}^t d_j (x_{ij} - y_{ij}) \Big| \le \Big(1-\frac{1}{2m-2}\Big) \cdot \max_{j \in [n]} d_j\quad \forall i\in[m], t \in [n].\]
\end{prop}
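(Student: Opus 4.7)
The plan is to apply Algorithm \ref{alg:deadline} verbatim to the input $(x, d)$ and to observe that the support constraint $y_{ij} = 0$ for $j < a_i$ is preserved automatically by the candidate mechanism. The discrepancy bound then follows directly from Theorem \ref{thm:WCAP}, with no change to either the algorithm or the analysis.

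Concretely, I would prove by induction on $t \in [n]$ the invariant that whenever $a_i > t$, the algorithm does not select $i$ at time $t$, i.e., $s_t \ne i$. Fix such an $i$. Since $x_{ij} = 0$ for all $j < a_i$, and in particular for all $j \le t$, we have $P_t(i) = 0$; by the inductive hypothesis, $y_{ij} = 0$ for all $j < t$ (because $a_i > t > j$), so $N_{t-1}(i) = 0$. Hence $P_t(i) - N_{t-1}(i) = 0$, which is strictly less than $\min\{d_t/m, \eps\}$ because $d_t > 0$ and $\eps = \tfrac{1}{2m-2} > 0$ (as $m > 1$). Therefore $i \notin C(t)$, and consequently $s_t \ne i$, which preserves the invariant.

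Once the invariant is established, normalizing $\max_{j \in [n]} d_j = 1$ and invoking Lemmas \ref{lem:lowerbound} and \ref{lem:upperbound} gives $|\Delta_t(i)| \le 1 - \eps$ for every $i \in [m]$ and $t \in [n]$, which is exactly the bound asserted in the proposition. The linear-time implementation carries over unchanged from the proof of Theorem \ref{thm:WCAP}, since the only additional work is the inductive check already performed implicitly by the candidate filter.

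The main anticipated obstacle would be to adapt the deadline analysis to respect the support constraint; however, this difficulty is essentially vacuous. The strictly positive threshold $\min\{d_t/m, \eps\}$ built into the candidate definition already guarantees that any machine $i$ not yet open at time $t$ (which has $P_t(i) - N_{t-1}(i) = 0$ by induction) is excluded from $C(t)$ at that step. Thus no change to Algorithm \ref{alg:deadline} is required, and no claim in the proof of Lemmas \ref{lem:lowerbound} or \ref{lem:upperbound} needs to be revisited.
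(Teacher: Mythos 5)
Your proposal is correct and follows essentially the same approach as the paper: apply Algorithm \ref{alg:deadline} unchanged, note that $P_t(i)=0<N_{t-1}(i)+\min\{d_t/m,\eps\}$ for $t<a_i$ so that $i$ is never a candidate before it opens, and then invoke Lemmas \ref{lem:lowerbound} and \ref{lem:upperbound}. The only cosmetic difference is that your induction establishing $N_{t-1}(i)=0$ is unnecessary, since $N_{t-1}(i)\ge 0$ already suffices for the candidate exclusion, as in the paper's proof.
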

\begin{proof}
    Note that for every $t<a_i$, $P_t(i)=\sum_{j=1}^td_jx_{ij}=0$, since $x_{ij}=0\ \forall j<a_i$. Therefore, for every $t<a_i$, $P_t(i)=0<N_{t-1}(i)+\min\{d_t/m,\eps\}$, which means $i$ is not a candidate at time $t$. Thus, the assignment $y$ returned by Algorithm \ref{alg:deadline} satisfies $y_{ij}=0\ \forall j<a_i$.
\end{proof}

\begin{lemma}\label{lemma:closing_time}
    Given release times $r\in \R_{\ge 0}^n$ of jobs, closing times $b\in \R_{\ge 0}^m$ of machines, $d \in \R^n_{>0}$ and a fractional assignment $x\in [0,1]^{m\times n}$ so that $x_{ij}=0\ \forall i,j: r_j>b_i$, there is a linear-time algorithm that computes an assignment $y\in\{0,1\}^{m\times n}$ so that $y_{ij}=0\ \forall i,j: r_j>b_i$ and 
    \[ \sum_{j=s}^t d_j(y_{ij}-x_{ij})\le  \Big(2-\frac{1}{m-1}\Big) \cdot \max_{j \in [n]} d_j \quad \forall i\in[m], 1\le s\le t\le n.\]
\end{lemma}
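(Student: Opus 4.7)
The plan is to reduce the claim to Proposition~\ref{prop:open_time} by reversing the time order of the jobs, which converts \emph{closing-time} restrictions into the \emph{opening-time} restrictions already handled, and then to deduce the interval bound by telescoping two suffix bounds.

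First I would sort jobs so that $r_1 \le r_2 \le \cdots \le r_n$ and, for each machine $i$, set $a_i := \max\{j \in [n] : r_j \le b_i\}$ (machines for which no such $j$ exists can be discarded, since the hypothesis forces $x_{ij} = 0$ for all $j$ and hence they receive no fractional mass). The restriction $x_{ij} = 0$ whenever $r_j > b_i$ then becomes $x_{ij} = 0$ whenever $j > a_i$. I would then form a reversed instance by relabeling job $j$ as $j' := n - j + 1$, with weights $\tilde d_{j'} := d_{n-j'+1}$, fractional assignment $\tilde x_{ij'} := x_{i,n-j'+1}$, and opening index $\tilde a_i := n - a_i + 1$. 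In the reversed instance the support condition becomes $\tilde x_{ij'} = 0$ whenever $j' < \tilde a_i$, which is exactly the input format required by Proposition~\ref{prop:open_time}.

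Applying Proposition~\ref{prop:open_time} to $(\tilde x, \tilde d, \tilde a)$ in linear time yields an integral assignment $\tilde y$ with $\tilde y_{ij'} = 0$ for $j' < \tilde a_i$ and, for every $i \in [m]$ and $t' \in [n]$,
\[ \Big| \sum_{j'=1}^{t'} \tilde d_{j'} (\tilde x_{ij'} - \tilde y_{ij'}) \Big| \le \Big(1 - \tfrac{1}{2m-2}\Big) \max_{j \in [n]} d_j. \]
Translating back via $y_{ij} := \tilde y_{i,n-j+1}$, I get $y_{ij} = 0$ whenever $j > a_i$, i.e. whenever $r_j > b_i$, and, for every $s \in [n]$, the \emph{suffix} bound
\[ \Big| \sum_{j=s}^{n} d_j (x_{ij} - y_{ij}) \Big| \le \Big(1 - \tfrac{1}{2m-2}\Big) \max_{j \in [n]} d_j. \]

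Finally, for any $1 \le s \le t \le n$ I would write
\[ \sum_{j=s}^t d_j (y_{ij} - x_{ij}) = \sum_{j=s}^n d_j (y_{ij} - x_{ij}) - \sum_{j=t+1}^n d_j (y_{ij} - x_{ij}), \]
(with the convention that an empty sum is $0$, covering $t = n$) and apply the triangle inequality, losing a factor of two to obtain the desired $(2 - \tfrac{1}{m-1}) \max_{j} d_j$ upper bound. Since Proposition~\ref{prop:open_time} runs in linear time and the reversal adds only linear overhead, the overall running time is linear. There is no substantive obstacle here; the only mild point to verify is that the reversal preserves the support condition required by Proposition~\ref{prop:open_time}, which is immediate, so the main ``work'' is really just the telescoping observation that a one-sided interval bound follows from two two-sided suffix bounds at the cost of doubling the constant.
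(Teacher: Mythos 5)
Your proposal is correct and follows essentially the same route as the paper: define $a_i=\max\{j:r_j\le b_i\}$, reverse the job order to turn closing times into the opening-time condition of Proposition~\ref{prop:open_time}, and bound each interval as the difference of two prefix (equivalently, suffix) sums, losing a factor of two to get $2\bigl(1-\tfrac{1}{2m-2}\bigr)=2-\tfrac{1}{m-1}$. The only cosmetic difference is that you phrase the final step via suffix bounds in the original instance while the paper keeps prefix sums in the reversed instance; these are identical.
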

\begin{proof}
    For every $i\in [m]$, define $a_i:=\max\{j:r_j\le b_i\}$. Then, assignment $x$ satisfies $x_{ij}=0\ \forall j>a_i$. Applying Proposition \ref{prop:open_time} with $x'_{ij}:=x_{i,n-j+1}$ and $a'_i:=n-a_i+1$, we conclude that there exists an assignment $y'$ satisfying $y'_{ij}=0\ \forall j<a'_i$, and thus $y_{ij}:=y'_{i,n-j+1}$ satisfies $y_{ij}=0\ \forall j>a_i$. Moreover, for every  $i\in[m]$ and $ 1\le s\le t\le n$,
    \[
    \begin{aligned}
    \sum_{j=s}^t d_j(y_{ij}-x_{ij})=&\sum_{j=n-t+1}^{n-s+1} d_j(y'_{ij}-x'_{ij})\\
    =& \ \sum_{j=1}^{n-s+1} d_j(y'_{ij}-x'_{ij})-\sum_{j=1}^{n-t} d_j(y'_{ij}-x'_{ij})\\
    \le& \ 2\Big(1-\frac{1}{2m-2}\Big)\cdot \max_{j\in [n]} d_j\\
    =& \ \Big(2-\frac{1}{m-1}\Big)\cdot \max_{j\in [n]} d_j.
    \end{aligned}
    \]
\end{proof}

\begin{proof}[Proof of Theorem \ref{thm:closing_time}]
    For every $j\in J$, let $M_j=\{i\in[m]: r_j\le b_i\}$. It follows from Lemmas \ref{lemma:closing_time} and \ref{lemma:flow-time} that there is a $(3-\frac{1}{m-1})$-approximation algorithm for maximum flow-time scheduling with closing times.
\end{proof}

\section{Lower bounds}\label{sec:lower_bound}

\begin{proof}[Proof of Proposition \ref{prop:caplb}]

If $m = 2$ we can simply take $x_{1,1} = x_{2,1} = \tfrac{1}{2}$. Otherwise, let $x \in \R^{m \times (m-1)}$ be defined as $x_{i,1} = \frac{1}{2m-2}$ for $i < m$ and $x_{m,1} = \tfrac{1}{2}$, $x_{m,j} = 0$ for $j > 1$ and $x_{i,j} = \frac{1}{m-1}$ otherwise (see below). Assume towards contradiction that there exists some assignment $y \in \{0,1\}^{m \times (m-1)}$ so that  $|\sum_{j \in [t]} (x_{ij} - y_{ij})| < 1-\frac{1}{2m-2}\ \forall i\in[m], t \in [m-1]$. If $y_{i,1} = 1$ for some $i \neq m$ then already $|x_{i,1} - y_{i,1}| = 1-\frac{1}{2m-2}$. Otherwise, $y_{m,1} = 1$ and yet each $i\in[m-1]$ must be assigned at least once, since $\sum_{j\in [m-1]} x_{ij}=1 - \frac{1}{2m-2}$. But there are only $m-2$ columns left, so one of $i\in[m-1]$ must never be assigned, which is a contradiction. \qedhere

\begin{center}
\begin{tikzpicture}[baseline]
\matrix (M) [matrix of math nodes, nodes in empty cells,
             row sep=1pt, column sep=0pt]{
            & 1 & 2 & \cdots & m\!-\!1 \\[1pt]
1           & \dfrac{1}{2m-2} & \dfrac{1}{m-1} & \cdots & \dfrac{1}{m-1} \\
2           & \dfrac{1}{2m-2} & \dfrac{1}{m-1} & \cdots & {\boxed{\dfrac{1}{m-1}}} \\
\vdots      & \vdots & \vdots & \ddots & \vdots \\
m\!-\!1     & \dfrac{1}{2m-2} &
              {\boxed{\dfrac{1}{m-1}}} & \cdots & \dfrac{1}{m-1} \\
m           & {\boxed{\dfrac{1}{2}}} & 0 & \cdots & 0 \\
            & & & & \\ 
};
\draw[line width=0.5pt]
  (M-2-2.north west) rectangle ([xshift=1.5em,yshift=0.9em]M-7-5.south east);
\end{tikzpicture}
\end{center}
\end{proof}

\begin{proof}[Proof of Proposition \ref{prop:carlb}] 

Consider the fractional assignment $x\in[0,1]^{3\times n}$ below where there are $n= 1+2 \cdot p$ columns, with $p := \lceil \tfrac{1-\delta}{2\delta} \rceil$ repeating pairs of columns after the first. Assume towards contradiction that there exists some assignment $y \in \{0,1\}^{3 \times n}$  so that $|\sum_{j \in [t]} (x_{ij} - y_{ij})| < 1-\delta$ for all $t \in [n]$. We have to assign $y_{3,1}=1$, for otherwise already $|x_{3,1}-y_{3,1}| = 1-\delta$. Since $\sum_{j\in [2s]} x_{1,j}=s-\delta$, to keep the discrepancy of the first row bounded, we must assign $y_{1, 2s} = 1$ and $y_{1,2s+1} =0$ for $s \in [p]$. 
Then we are forced to assign $y_{3,2s+1}=1$ for every $s\in [p]$ because $x_{2,2s+1}=0$ implies $y_{2,2s+1} =0$. Yet the second row sums to $2\delta \cdot p \ge 1-\delta$ without ever being assigned, a contradiction.\qedhere 

\begin{center}
\begin{tikzpicture}
\matrix (M) [matrix of math nodes, nodes in empty cells,
             row sep=1pt, column sep=5pt,
             left delimiter={[}, right delimiter={]}]{
  \delta     & 1-2\delta & 2\delta    & \cdots & 1-2\delta & 2\delta \\
  0          & 2\delta   & 0          & \cdots & 2\delta   & 0       \\
  1-\delta   & 0         & 1-2\delta  & \cdots & 0         & 1-2\delta \\
};
\end{tikzpicture}
\end{center}
\end{proof}

\begin{proof}[Proof of Proposition \ref{prop:intlb}] A construction is given by $x_{i,j} = v_i$ for all $i \in [3], j \in [100]$ where $v = (0.01, 0.48, 0.51)$, which can be computationally verified to achieve interval discrepancy equal to 1.32. See Appendix \ref{sec:code}. 
\qedhere
\end{proof}

\section{Open problems}\label{sec:open}

We ask for a version with non-uniform weights which may shed light on the \emph{unrelated machines} scheduling, where job $j$ has processing time $d_{ij}$ on machine $i$:

\begin{conjecture}\label{conj:non_uniform}
   Let $m, n$ be positive integers and $d \in \R^{m \times n}_{>0}$. For every fractional assignment $x \in [0,1]^{m \times n}$, there is an assignment $y \in \{0,1\}^{m \times n}$ so that for every $i\in[m]$ and $t\in[n]$,
    \[
    \Big|\sum_{j \in [t]} d_{ij} (x_{ij}-y_{ij}) \Big| \le  \max_{i\in [m], j \in [n]} d_{ij}.
    \]
\end{conjecture}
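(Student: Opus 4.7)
The plan is to try to adapt the earliest-deadline framework of Section~\ref{sec:proof} to the non-uniform setting. Normalize so that $\max_{i,j} d_{ij} = 1$. For each row $i$ and time $t$, define $P_t(i) := \sum_{j\le t} d_{ij} x_{ij}$, $N_t(i) := \sum_{j\le t} d_{ij} y_{ij}$, $\Delta_t(i) := P_t(i) - N_t(i)$, and $D_t(i) := \min\{T \ge t : P_T(i) \ge N_{t-1}(i) + 1\}$. At each time $t$ select $s_t$ to be a row with the earliest deadline, subject to a candidacy condition designed to enforce the lower bound. The lower bound $\Delta_t(i) \ge -1$ then comes for free: if $j\le t$ is the latest time with $s_j = i$, candidacy forces $P_j(i) \ge N_{j-1}(i)$, so $\Delta_t(i) \ge P_j(i) - N_{j-1}(i) - d_{ij} \ge -1$, since we are not attempting to shave off the $\eps$ factor that Theorem~\ref{thm:WCAP} achieves.

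The first hurdle is that the candidate set need not be nonempty. In the uniform case this was ensured by the telescoping identity $\sum_i (P_t(i) - N_{t-1}(i)) = d_t > 0$, but now one computes
\[
\sum_i (P_t(i) - N_{t-1}(i)) \;=\; \sum_i d_{it} x_{it} + \sum_{j<t}\Big(\sum_i d_{ij} x_{ij} - d_{s_j, j}\Big),
\]
and the second term, the accumulated excess expenditure of the algorithm relative to the fractional baseline, can go negative. One fix I would try is to relax the candidacy condition to $P_t(i) - N_{t-1}(i) \ge -\tau$ for a carefully tuned slack $\tau$, and to lean on the fact that the conjecture's target $\max_{i,j} d_{ij}$ is weaker (by a factor of $1-\eps$) than what Theorem~\ref{thm:WCAP} achieves in the uniform setting. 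This extra budget may be used to absorb the non-conservation.

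The deeper difficulty is the upper bound. The contradiction driving Lemma~\ref{lem:upperbound} ultimately compared the algorithm's expenditure $\sum_{j\in(t',t]} d_j$ against $\sum_{i\in I}(P_t(i)-P_{t'}(i)) = \sum_{j\in(t',t], i\in I} d_{ij} x_{ij}$; the two quantities were coupled because summing over all $i$ recovered the same total. With row-dependent weights this coupling breaks and the algorithm's path-dependent running cost $\sum_j d_{s_j, j}$ need not track $\sum_{i,j} d_{ij} x_{ij}$ at all. Bounding this mismatch is the main obstacle I anticipate: a naive accounting loses a factor of $m$, precisely matching the current $2m \cdot \max d_{ij}$ bound derivable from general prefix-discrepancy tools for the related weighted carpooling problem.

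An alternative direction I would explore is a direct reduction to Conjecture~\ref{conj:ssuf}, generalizing the construction of Figure~\ref{fig:ssuf}. One would place row $i$ on its own path with the $j$-th arc carrying $d_{ij} x_{ij}$, and split each terminal $t_j$ into $m$ copies $t_{j,1},\ldots,t_{j,m}$ with copy $i$ receiving demand $d_{ij}$, constraining that exactly one copy be routed from its owning row. The obstruction is that Conjecture~\ref{conj:ssuf} fixes a single demand per terminal rather than a demand that depends on the path chosen, so some gadget is needed to force the splitting to agree on a single $i$ across the $m$ copies; this gluing step seems to take us outside the scope of single-source unsplittable flow, and the resolution of Conjecture~\ref{conj:non_uniform} may require genuinely new ideas rather than a black-box reduction.
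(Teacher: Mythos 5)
The statement you were asked about is Conjecture~\ref{conj:non_uniform}, which the paper poses as an \emph{open problem} in Section~\ref{sec:open}; the paper contains no proof of it, so there is nothing to compare your argument against on the paper's side. Your proposal, as you yourself acknowledge, is also not a proof: it identifies two genuine obstacles and leaves both unresolved. Concretely, (i) the candidate-nonemptiness argument of Theorem~\ref{thm:WCAP} relies on the conservation identity $\sum_i (P_t(i)-N_{t-1}(i)) = d_t > 0$, which fails once the weight charged by the algorithm is $d_{s_j,j}$ rather than $d_j$; your proposed fix (a slack $\tau$ in the candidacy condition) is not accompanied by any bound on how negative the accumulated excess $\sum_{j<t}(\sum_i d_{ij}x_{ij} - d_{s_j,j})$ can get, and without such a bound the lower-bound half of the analysis collapses as well, since candidacy with slack $\tau$ only yields $\Delta_t(i)\ge -1-\tau$. (ii) For the upper bound, the contradiction in Lemma~\ref{lem:upperbound} hinges on the fact that $\sum_{i\in I}(P_t(i)-P_{t'}(i)) \le \sum_{i\in[m]}(P_t(i)-P_{t'}(i)) = \sum_{j\in(t',t]} d_j$ is \emph{exactly} the quantity the algorithm spent on $(t',t]$; with row-dependent weights these two totals are no longer the same object, and you offer no replacement coupling. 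Your alternative reduction to Conjecture~\ref{conj:ssuf} is also, as you note, not a reduction: single-source unsplittable flow fixes one demand per terminal, so the per-path demands $d_{ij}$ cannot be encoded without a gadget that you do not supply.

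Your diagnosis of where the difficulty lies is accurate and consistent with the paper's framing (the paper notes only that the non-prefix case $t=n$ follows from the argument of Lenstra, Shmoys and Tardos, and that the construction of Proposition~\ref{prop:carlb} with $d_{ij}\in\{0,d_j\}$ gives a $(1-\delta)\cdot\max_{i,j}d_{ij}$ lower bound, so any proof would have to be essentially tight). But a correct assessment of the obstacles is not a proof, and no step of your outline closes either gap; the conjecture remains open.
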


 The construction in Proposition~\ref{prop:carlb} applied to $d_{ij} \in \{0,d_j\}$ also shows a lower bound of $(1-\delta)\cdot \max_{i\in [m], j \in [n]} d_{ij}$ for all $\delta > 0$ for this setting. The argument in~\cite{lenstra1990approximation} shows that the bound holds for the non-prefix version, i.e. $t = n$.

We also formulate a \emph{weighted committee assignment problem}:

\begin{conjecture}\label{conj:int_weights}
   Let $m, n$ be positive integers and $d \in \R^{n}_{>0}$. For every $x \in [0,1]^{m \times n}$ so that $n_j := \sum_{i \in [m]} x_{ij} \in \mathbb{N}$ for all $j \in [n]$, there is $y \in \{0,1\}^{m \times n}$ so that $\sum_{i \in [m]} y_{ij} = n_j$ for all $j \in [n]$ and
    \[
    \Big|\sum_{j \in [t]} d_{j} (x_{ij}-y_{ij}) \Big| \le  \max_{j \in [n]} d_{j}\quad \forall i\in[m], t\in [n].
    \]
\end{conjecture}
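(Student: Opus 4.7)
The plan is to generalize the Earliest Deadline Algorithm of Section~\ref{sec:proof} so that at each time $t$ a committee $S_t \subseteq [m]$ of size exactly $n_t$ is selected (rather than a single row). Normalize $\max_{j \in [n]} d_j = 1$, and maintain $P_t(i)$, $N_t(i)$, $\Delta_t(i)$ as before, together with deadlines $D_t(i) := \min\{T \ge t : P_T(i) \ge N_{t-1}(i) + 1\}$ and a candidate set $C(t)$ capturing rows whose deficit $P_t(i) - N_{t-1}(i)$ is non-negligible. The algorithm would select $S_t$ to consist of the $n_t$ members of $C(t)$ with the smallest deadlines, setting $y_{it}=1$ for $i \in S_t$ and $y_{it}=0$ otherwise.

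Assuming $|C(t)| \ge n_t$, the lower bound $\Delta_t(i) \ge -1$ should carry over from Lemma~\ref{lem:lowerbound}: if $j$ is the latest time $i$ was picked, then $i$ was a candidate at time $j$, so $\Delta_j(i) \ge -d_j \ge -1$. For the upper bound $\Delta_t(i) \le 1$, the plan is to mimic Lemma~\ref{lem:upperbound}: assume some $\Delta_t(i) > 1$, define $t'$ as the latest time in $[t]$ at which the algorithm picked some row with deadline (at that moment) strictly larger than $t$, and study the set $I := \{i : P_t(i) \ge N_{t'-1}(i) + 1\}$ of rows whose deadlines at time $t'$ were already at most $t$. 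One would then attempt to derive a chairman-style contradiction by playing the telescoping identity $\sum_{j \in (t',t]} d_j n_j = \sum_{i \in [m]}(P_t(i) - P_{t'}(i))$ against the analogues of Claims~\ref{claim:bound-t'} and~\ref{claim:bound-t} restricted to rows in $I$.

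The primary obstacle is two-fold. First, the naive averaging argument showing $|C(t)| \ge n_t$ already breaks down once $n_t \ge 3$: although $\sum_i(P_t(i) - N_{t-1}(i)) = d_t n_t$ with each summand in $[-1, 1+d_t]$, a single row can absorb most of the mass, leaving fewer than $n_t$ rows exceeding any positive threshold. A more refined candidacy rule appears necessary---perhaps using lookahead into future columns, strengthening the inductive invariant to bound $\Delta_{t-1}(i)$ more tightly on rows destined to be picked soon, or selecting candidates as a greedy base of a derived matroid. Second, even granting feasibility, whenever $|I| < n_j$ for some $j \in (t',t]$, the algorithm is \emph{forced} to pick rows outside $I$, which breaks the clean identity $\sum_{j \in (t',t]} d_j = \sum_{i \in I}(N_t(i) - N_{t'-1}(i))$ underlying Claim~\ref{claim:pick-in-I} in the chairman proof. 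Overcoming this will likely require a Hall-type matching or matroid-exchange argument exploiting that feasible committees form the base polytope of a uniform matroid; a direct reduction to Conjecture~\ref{conj:ssuf} does not seem to suffice, since the distinctness constraint within each column is not naturally encoded by any single-source unsplittable flow instance.
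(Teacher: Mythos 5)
The statement you are trying to prove is Conjecture~\ref{conj:int_weights}, which the paper poses as an \emph{open problem} in Section~\ref{sec:open}; the paper contains no proof of it, so there is nothing to match your argument against. More importantly, your proposal is not a proof: it is a plan whose two critical steps you yourself flag as unresolved, and both are genuine gaps rather than routine details. First, feasibility: the averaging identity $\sum_{i}(P_t(i)-N_{t-1}(i))=d_t n_t$ only guarantees \emph{one} row with deficit at least $d_t n_t/m$, not $n_t$ rows above any fixed positive threshold, so the candidate set may be too small to fill the committee; no refinement of the candidacy rule is actually exhibited, and it is not clear the inductive invariant $\Delta_{t-1}(i)\ge -1$ alone forces $n_t$ usable rows. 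Second, the upper-bound argument of Lemma~\ref{lem:upperbound} leans on the fact that exactly one row is chosen per column: Claim~\ref{claim:pick-in-I} (every pick in $(t',t]$ lands in $I$, and $s_{t'}\notin I$ gives $|I|\le m-1$) is what makes the telescoping chain $\sum_{j\in(t',t]}d_j=\sum_{i\in I}(N_t(i)-N_{t'-1}(i))$ close against Claims~\ref{claim:bound-t'} and~\ref{claim:bound-t}. With committees, a column $j\in(t',t]$ with $n_j$ larger than the number of available short-deadline rows forces picks outside $I$, and you give no replacement identity or Hall/matroid argument; you only state that one ``will likely be required.''

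A further caution on the fallback you mention: splitting each column $j$ into $n_j$ unit-sum columns and invoking Theorem~\ref{thm:WCAP} (or Conjecture~\ref{conj:ssuf}) does not work either, precisely because nothing prevents the same row from being selected for two of the split copies, violating $y\in\{0,1\}^{m\times n}$ with $\sum_i y_{ij}=n_j$ --- the distinctness-within-a-column constraint is the whole difficulty, as you correctly observe. So the honest summary is that your proposal identifies the right obstacles but does not overcome them; the statement remains open, and to make progress you would need either (i) a candidacy rule plus invariant strong enough to guarantee $|C(t)|\ge n_t$ at every step, or (ii) an exchange/matching argument (e.g., over bases of the rank-$n_t$ uniform matroid per column) that restores an analogue of Claim~\ref{claim:pick-in-I} when forced picks leave $I$.
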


It is plausible that Conjectures~\ref{conj:weighted_carpool},~\ref{conj:non_uniform} and~\ref{conj:int_weights} may all be combined:

\begin{conjecture}\label{conj:allinone}
   Let $m, n$ be positive integers and $d \in \R^{m \times n}_{>0}$. For every $x \in [0,1]^{m \times n}$ so that $n_j := \sum_{i \in [m]} x_{ij} \in \mathbb{N}$ for all $j \in [n]$, there is $y \in \{0,1\}^{m \times n}$ so that $\sum_{i \in [m]} y_{ij} = n_j$ for all $j \in [n]$, $y_{ij} = 0$ whenever $x_{ij} = 0$, and
    \[
    \Big|\sum_{j \in [t]} d_{ij} (x_{ij}-y_{ij}) \Big| \le  \max_{i \in [m], j \in [n]} d_{ij}\quad \forall i\in[m], t\in [n].
    \]
\end{conjecture}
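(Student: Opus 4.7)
The natural first attempt is to extend the Earliest Deadline Algorithm of Section \ref{sec:proof} to simultaneously handle the three features of Conjecture \ref{conj:allinone}: the support constraint $y_{ij}=0$ when $x_{ij}=0$, per-machine weights $d_{ij}$ instead of a single $d_j$, and column sums $n_j \in \mathbb{N}$ possibly larger than one. Let $d^* := \max_{i,j} d_{ij}$ and $\eps := 1/(2m-2)$. For each $i$ and $t$, set $P_t(i) := \sum_{j \le t} d_{ij} x_{ij}$, $N_t(i) := \sum_{j \le t} d_{ij} y_{ij}$, and define the deadline $D_t(i)$ as the smallest $T \ge t$ with $P_T(i) \ge N_{t-1}(i) + (1-\eps) d^*$. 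Call $i$ a \emph{candidate} at time $t$ if $x_{it} > 0$ and $P_t(i) \ge N_{t-1}(i) + \min\{d_{it}/m,\, \eps\, d^*\}$. At each step $t$, the algorithm picks the $n_t$ candidates of smallest deadline (breaking ties arbitrarily) and sets their $y_{it} = 1$.

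For the upper bound direction, I would imitate Lemma \ref{lem:upperbound}. Assume some $i^*$ has $\Delta_t(i^*) > (1-\eps) d^*$; let $t'$ be the latest step at which the algorithm selected a row whose deadline exceeded $t$, and set $I := \{i : D_{t'}(i) \le t\}$. Analogues of Claims \ref{claim:pick-in-I}--\ref{claim:bound-t} should go through after replacing counts by $d_{ij}$-weighted counts and using that at least one row outside $I$ is selected at $t'$, so $|I| \le m-1$. The final telescoping chain should end in an inequality of the form $\sum_{j \in (t',t]} \sum_{i \text{ picked at } j} d_{ij} < \sum_{i \in [m]} (P_t(i) - P_{t'}(i))$, and a careful accounting of the $\eps d^*$ slacks should yield a contradiction.

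The hard part is the lower bound. In Lemma \ref{lem:lowerbound} we exploited that whenever row $i$ was picked at step $j$, it was a candidate, so $P_j(i) - N_{j-1}(i) \ge \min\{d_j/m,\eps\}$, which bounds the deficit as $\Delta_j(i) \ge -1+\eps$. Under the carpooling support constraint, a row $i$ may have all of its positive entries $x_{ij}$ concentrated in a short window and be pinned to $y_{ij}=0$ afterwards; if the algorithm over-assigns it within that window (because it is the cheapest candidate), then $\Delta_t(i)$ slips below $-d^*$ with no later opportunity to recover. Non-uniform weights compound this, since the conservation identity $\sum_i (P_t(i) - P_{t-1}(i)) = d_t$ used to certify $C(t)\ne\emptyset$ becomes $\sum_i d_{it} x_{it}$, with no clean lower bound in terms of $d^*$. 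Worse, in the final telescoping step, the right-hand side $\sum_{i}(P_t(i) - P_{t'}(i)) = \sum_{j \in (t',t]} \sum_i d_{ij} x_{ij}$ is no longer equal to $\sum_{j \in (t',t]} \sum_i d_{ij} y_{ij}$, so even the upper-bound argument requires a genuinely new idea.

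Because Conjecture \ref{conj:allinone} simultaneously strengthens Conjectures \ref{conj:weighted_carpool}, \ref{conj:non_uniform} and \ref{conj:int_weights} --- each of which is open --- a direct combinatorial proof is probably out of reach without first resolving the constituents. Two alternatives seem worth exploring: (i) reduce to an enriched form of the Morell--Skutella Conjecture \ref{conj:ssuf} via a DAG gadget similar to Figure \ref{fig:ssuf} in which each ``machine path'' has arcs whose capacities encode the per-machine weight $d_{ij}$ and the terminals have demands $n_j$; and (ii) round the natural LP iteratively, exploiting that the support-constrained assignment polytope is a transportation polytope with prefix discrepancy constraints, and apply Banaszczyk-style convex-geometric techniques \cite{banaszczyk2012series} to first obtain an $O(\sqrt{\log n})$ bound (as in \cite{BansalRohwedderSvensson2022STOC}) and then sharpen to $\max_{i,j} d_{ij}$.
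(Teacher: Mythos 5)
The statement you are addressing is Conjecture~\ref{conj:allinone}, which the paper poses as an \emph{open problem} and does not prove; it simultaneously strengthens Conjectures~\ref{conj:weighted_carpool}, \ref{conj:non_uniform} and~\ref{conj:int_weights}, all of which are open, and the paper's Theorem~\ref{thm:WCAP} covers only the special case with a single weight $d_j$ per column, no support constraint, and $n_j=1$. Your write-up is candid about this and is best read as a research plan rather than a proof; as a proof attempt it has a genuine gap, since neither the extended Earliest Deadline Algorithm nor the two alternative directions (a DAG reduction to Conjecture~\ref{conj:ssuf}, or Banaszczyk-style rounding) is carried to a conclusion, and you yourself identify the two places where the argument breaks: the candidate set may be empty once the support constraint forbids the rows carrying the surplus, and the telescoping identity $\sum_i (P_t(i)-P_{t'}(i)) = \sum_{j\in(t',t]}\sum_i d_{ij}x_{ij}$ no longer matches the assigned mass $\sum_{j\in(t',t]}\sum_i d_{ij}y_{ij}$ when weights are row-dependent.

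One concrete additional problem with the algorithm as you specify it: by keeping $\eps = 1/(2m-2)$ in the deadline and candidate definitions you are implicitly targeting the bound $(1-\eps)\,d^*$, but Proposition~\ref{prop:carlb} shows that already in the unweighted carpooling setting (uniform $d_{ij}$, support constraint, $n_j=1$, $m=3$) no bound of the form $(1-\delta)\max_j d_j$ is achievable for any fixed $\delta>0$. So any analysis of your algorithm that delivers the $(1-\eps)d^*$ guarantee its slack parameters are designed for is provably impossible; at minimum the $\eps$-machinery would have to be abandoned (as the paper notes, $\eps=0$ suffices for the weaker bound $\max_j d_j$ in the unconstrained case), and even then the lower-bound lemma (the analogue of Lemma~\ref{lem:lowerbound}) fails for exactly the reason you describe: a row whose support lies in a short window can be over-assigned there and pinned afterwards, with no mechanism to recover the deficit. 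In short, the conjecture remains open and your proposal does not close it, though your diagnosis of where the paper's technique stops working is accurate.
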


\paragraph{Acknowledgements} We would like to thank R. Ravi for numerous discussions. The work was done during an internship of the first author at Microsoft Research.

\bibliographystyle{alpha} 
\bibliography{cap}

\appendix


\section{Computer program}\label{sec:code}

We use the following Julia function to verify Proposition~\ref{prop:intlb}:

\begin{center}
\begin{minipage}{\linewidth}
\lstset{language=Julia, numbers=left, basicstyle=\ttfamily\small}
\begin{lstlisting}
using JuMP, Gurobi

function interval_disc(m,n,x)
    opt = Model(Gurobi.Optimizer)
    @variable(opt, Delta >= 0) # interval discrepancy
    @variable(opt, y[1:m,1:n], Bin) # assignment 
    
    for j in 1:n
        @constraint(opt, sum(y[:,j]) == 1)
    end
    for i in 1:m
        for s in 1:n
            for t in s:n
                @constraint(opt, sum(y[i,s:t]) - sum(x[i,s:t]) <= Delta)
                @constraint(opt, sum(y[i,s:t]) - sum(x[i,s:t]) >= -Delta)
            end
        end
    end
    
    @objective(opt, Min, Delta)
    optimize!(opt)
    return value(Delta)
end
\end{lstlisting}
{  \captionsetup{hypcap=false}
\captionof{lstlisting}{Input: $m,n \in \mathbb{N}$ and a fractional assignment $x \in [0,1]^{m\times n}$. Output: the smallest $\Delta$ for which there exists an assignment $y \in \{0,1\}^{m\times n}$ so that $|\sum_{j \in [s,t]} (x_{ij}-y_{ij})| \le \Delta$ for all $i \in [m]$ and $1 \le s \le t \le n$.}
}
\label{lst:intervaldisc}
\end{minipage}
\end{center}
\end{document}